\documentclass[journal]{IEEEtran}

\usepackage{graphicx}
\usepackage{times} 
\usepackage{amsmath} 
\usepackage{amsthm}
\usepackage{amssymb}
\usepackage{bbm}
\usepackage{tikz}
\usetikzlibrary{positioning, arrows.meta, automata}
\usetikzlibrary{graphs}
\usepackage[colorlinks=false, urlcolor=blue, pdfborder={0 0 0}]{hyperref}
\usepackage{enumerate}
\usepackage{color}
\usepackage[linesnumbered,vlined,ruled]{algorithm2e}
\usepackage{subfigure}
\usepackage{comment}
\usepackage{cite}
\usepackage{tcolorbox}
\usepackage{verbatim} 
\usepackage{enumerate}
\usepackage{rotating}
\usepackage{enumitem}

\newtheorem{mydef}{Definition}
\newtheorem{mythm}{Theorem}
\newtheorem{myprob}{Problem}
\newtheorem{mylem}{Lemma}

\newtheorem{mypro}{Proposition}
\newtheorem{myexm}{Example}
\newtheorem{remark}{Remark}

\def \L{{\mathcal{L}}}

\def \S{{\mathfrak{S}}}
\def \A{{\mathfrak{A}}}

\title{Opacity Enforcing Supervisory Control\\ with a Priori Unknown Supervisors}

\author{
{Bohan Cui}, \IEEEmembership{Student Member, IEEE},  Ziyue Ma, \IEEEmembership{Senior Member, IEEE}, Alessandro Giua, \IEEEmembership{Fellow, IEEE}\\ and Xiang Yin, \IEEEmembership{Member, IEEE}

\thanks{This work was supported by the National Natural Science Foundation of
China (62173226, 62061136004, 92367203).}
\thanks{B. Cui and X. Yin are with the School of Automation and Sensing, Shanghai Jiao Tong University, Shanghai 200240, China, and also with the Key Laboratory of System Control and Information Processing, the Ministry of Education of China, Shanghai 200240, China. {\tt  E-mail: \{bohan\_cui, yinxiang\}@sjtu.edu.cn}. 
Ziyue Ma is with the School of Electro-Mechanical Engineering, Xidian University, Xi'an 710071, China. {\tt  E-mail: maziyue@xidian.edu.cn}.
Alessandro Giua is with the Department of Electrical and Electronic Engineering, University of Cagliari, Cagliari 09123, Italy. {\tt  E-mail: giua@unica.it}. (Corresponding Author: Xiang Yin)}
}

\begin{document}

\maketitle

\begin{abstract}
We investigate the enforcement of opacity in discrete-event systems via supervisory control. A system is said to be opaque if a passive intruder can never unambiguously infer whether the system is in a secret state through its observations.  In this context, the intruder’s knowledge about the supervisor plays a critical role in both problem formulation and solvability. Existing studies typically assume that the policy of the supervisor is either \emph{fully unknown} to the intruder or \emph{fully known a priori}, the latter leading to severe technical challenges and unresolved problems under incomparable observations.
This paper investigates opacity supervisory control under a new intermediate information setting, which we refer to as the \emph{a priori unknown} supervisor setting. In this setting, the supervisor’s internal realization is not publicly available, but the intruder can partially infer its behavior by eavesdropping on the control decisions issued online during system execution. We formalize the intruder’s information-flow under both observation-triggered and decision-triggered decision-issuance mechanisms and define the corresponding notions of opacity.
We provide sound and complete algorithms for synthesizing opacity-enforcing supervisors without imposing any restrictions on the observable or controllable event sets. 
By constructing an information-state structure that embeds the supervisor’s estimate of the intruder’s belief, the synthesis problem is reduced to a safety game. 
Finally, we show that, under strictly finer intruder observations, the proposed setting coincides with the standard a priori known supervisor model.
\end{abstract}

\begin{IEEEkeywords}
Discrete-Event Systems, Supervisory Control Theory, Opacity, Partial Observation.
\end{IEEEkeywords}

\IEEEpeerreviewmaketitle

\section{Introduction}

\subsection{Motivations} 

\IEEEPARstart{E}{nsuring} the privacy and security of system behaviors has become a critical challenge in modern engineering and cyber-physical systems (CPS), where information leakage or malicious observation can result in severe consequences. This paper investigates a fundamental security property, known as \emph{opacity} \cite{lafortune2018history,basilio2021analysis,liu2022secure}, within the framework of discrete-event systems (DES). In this setting, we assume the presence of a passive intruder that monitors the evolution of the system through its information-flow. A system is said to be opaque if the intruder can never infer with certainty that the system is in a secret state, thereby providing plausible deniability for sensitive operations.

As a fundamental system-theoretic property for information-flow security, opacity has attracted considerable attention in recent years from the  control systems community. 
In the context of DES, various notions of opacity have been proposed, including current/initial-state opacity \cite{lin2011opacity,saboori2013verification,han2023strong,miao2025always,basile2026sequence}, 
$K$/infinite-step opacity \cite{saboori2011verification,yin2017new,tong2022verification}, 
joint opacity \cite{zhu2022online,ritsuka2025joint} and pre-opacity \cite{yang2022secure}.
From the modeling perspective, beyond the finite-state automaton model of DES, opacity has also been investigated for infinite-state systems, such as Petri nets \cite{tong2017verification,berard2018complexity}, timed systems \cite{lefebvre2020exposure,gao2024state,dong2025k,zheng2025enforcement}, and continuous dynamical systems \cite{ramasubramanian2020notions,yin2021approximate,liu2021compositional,an2022enhancement}. Moreover, in addition to binary (opaque/non-opaque) analysis, several security-level metrics have been proposed to quantify the degree of opacity of a system  \cite{chen2016quantification,keroglou2018probabilistic,mu2022verifying,liu2024approximate}.

When the open-loop system is not opaque, an important problem is to enforce opacity via suitable enforcement mechanisms. In the DES literature, various opacity-enforcement mechanisms have been developed, such as supervisory control \cite{takai2008formula,ben2011supervisory,tong2018current,dubreil2010supervisory,saboori2011opacity,yin2015uniform,darondeau2015enforcing,xie2021opacity}, information-flow editing \cite{cassez2012synthesis,li2023opacity,liu2022opacity,udupa2025synthesis}, event encryption/delays \cite{falcone2015enforcement,lima2023ensuring,schonewille2024existence,reis2026enforcing}, and hybrid combinations of these techniques \cite{tai2023privacy,wintenberg2025integrating}. Among them, supervisory control is one of the most fundamental and widely investigated approaches.
The key idea is to synthesize a supervisor, in the Ramadge-Wonham framework \cite{wonham2018supervisory}, that dynamically disables/enables events online so as to eliminate secret-revealing behaviors, thereby rendering the resulting closed-loop system opaque.

\subsection{Status and Challenges in Opacity Supervisory Control}
In the context of supervisory control for opacity, one of the most important factors is \emph{what information the intruder knows about the closed-loop system under control}. This information directly affects the definition of opacity for controlled systems, since it determines which language the intruder uses to construct the system state estimate. Moreover, it also significantly influences the technical difficulty of the control synthesis problem. This is because it may be infeasible to decouple the intruder’s knowledge of the closed-loop behavior from the supervisor’s control policy, which is exactly what needs to be synthesized. Existing works in the literature can be broadly divided into the following two categories.

\textbf{Fully Unknown Supervisor Setting.}
A simple setting for opacity-enforcing supervisory control assumes that the intruder is \emph{fully unaware} of the supervisor’s implementation.
Therefore, the intruder still estimates the system state based on the open-loop language (i.e., without control), although some strings in this language are no longer feasible in the closed loop.
Under this assumption, the knowledge of the supervisor and the intruder can be fully decoupled.
Specifically, one can first construct the intruder’s state estimate based on the open-loop system, and then enforce opacity as a safety specification over this estimate from the supervisor’s perspective.
This setting has been completely solved in \cite{tong2018current} for finite-state systems,
where no restriction is imposed on the supervisor’s observable event set \(\Sigma_o\),
the supervisor’s controllable event set \(\Sigma_c\),
or the intruder’s observable event set \(\Sigma_a\).
A similar setting, in which the intruder is completely unaware of the supervisor, has also been considered in \cite{xie2021secure,shi2023synthesis,zhong2025secure}.

\textbf{A Priori Known Supervisor Setting.}
A more technically challenging setting assumes that the supervisor’s control policy is fully known \emph{a priori} to the intruder. 
That is, the control policy of the supervisor will become public information once it is designed offline.
Therefore, during the online execution phase, the intruder uses the \emph{closed-loop} language under supervision to estimate the system state, which is more precise than in the open-loop case.
This setting requires the control design to explicitly account for how control laws affect the intruder’s knowledge during the offline design stage.
Such a consideration is generally very difficult when the intruder’s and the supervisor’s observations are incomparable, in which case their information states cannot be decoupled.
To our knowledge, a general solution to the opacity control problem under this setting has remained open for almost 20 years.
Existing methods either rely on iterative designs without convergence guarantees \cite{takai2008formula,dubreil2008opacity}, or impose additional assumptions on the event sets \cite{dubreil2010supervisory,saboori2011opacity,yin2015uniform,xie2021opacity,moulton2022using}.

\subsection{Our Results and Contributions}
In this paper, we investigate the opacity-enforcing supervisory control problem from a new angle. Specifically, different from the fully unknown setting and the \emph{a priori} known setting studied in the literature, we consider a new setting in which the control policy of the synthesized supervisor is not directly public information to the intruder. Instead, the intruder can only access partial information about the supervisor by eavesdropping on the online control decisions that have already been issued. Therefore, we refer to this setting as the \emph{a priori unknown} setting. Note that the intruder’s knowledge about the supervisor along executed strings is still imperfect, since the supervisor and the intruder may have different observations. In particular, the intruder may observe that the supervisor has issued a new control decision, but does not know which supervisor observation triggered this decision update. 

The proposed \emph{a priori unknown} setting is of  importance  from both theoretical and practical perspectives. From the practical perspective, it captures realistic scenarios in which the supervisor’s control policy is not immediately available to the intruder, but is gradually revealed through its online interaction with the system. Such a situation arises in many applications; for example, the supervisor may need to 
broadcast or transmit its control decisions to actuators over communication channels that are not secure and may be eavesdropped on by an intruder. From the theoretical perspective, this setting identifies a new class of opacity-enforcing supervisory control problems, where the intruder can leverage more information than in the fully unknown setting, yet incomparable to the \emph{a priori} known setting. In particular, this new setting admits a complete solvability result without imposing any assumptions on the event sets; that is, the supervisor and the intruder may have incomparable observations.

\begin{figure}[tp]
	\centering
	\includegraphics[width=1\linewidth]{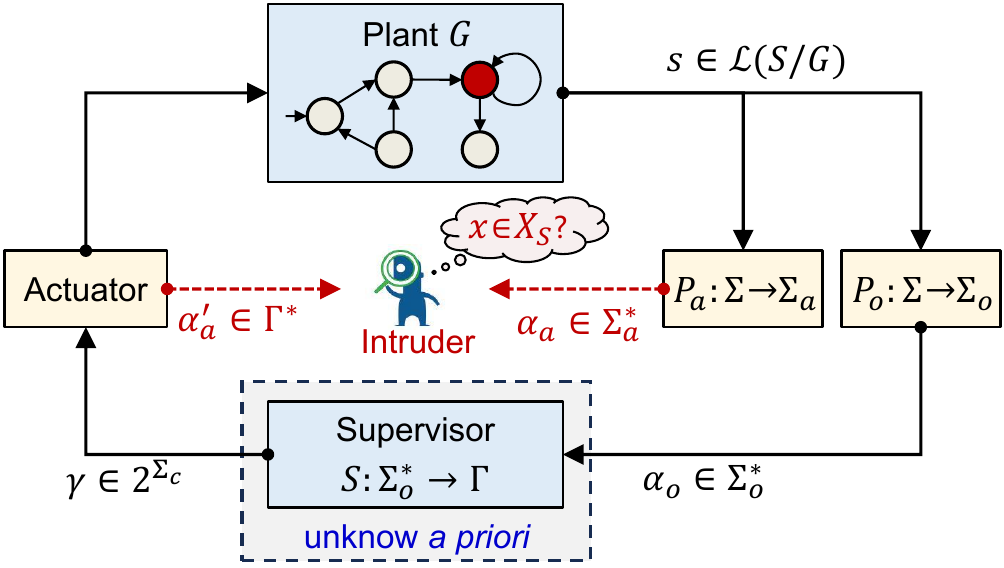}
	\caption{Conceptual illustration of  the \emph{a priori unknown} opacity supervisory control setting investigated.}
	\label{fig:illu}
\end{figure}

Our main contributions  are summarized as follows:
\begin{itemize}
    \item 
    We formulate a new class of opacity-enforcing supervisory control problems, where the internal realization of the supervisor is initially undisclosed, but the intruder can monitor the issuance of control decisions online during  system executions. In particular, we consider two distinct decision-issuance mechanisms: \emph{observation-triggered}, where the supervisor issues a decision upon observing a new event, and \emph{decision-triggered}, where a decision is issued only when the supervisor needs to modify the current control action applied. For both mechanisms, we define the intruder’s information-flow and the corresponding notion of opacity under the \emph{a priori} unknown supervisor setting.
    \item 
    We then develop an effective game-theoretic approach to solve the supervisor synthesis problem under this new setting. To capture the dynamic interaction among the plant, the supervisor, and the intruder, we construct a novel information structure that incorporates the supervisor’s estimate of the intruder’s knowledge. We characterize the intruder’s state-estimation process as a function of both its observations and the released control decisions. Based on this characterization, we reduce the synthesis problem to a safety game and provide systematic algorithms for computing an opacity-enforcing supervisor. We prove that our algorithms are sound and complete.
    \item 
    Our setting is more general than the fully unknown supervisor setting, while still enjoying complete solvability without imposing any assumptions on the event sets. That is, no restriction is imposed on \(\Sigma_o\),   \(\Sigma_c\), or  \(\Sigma_a\). 
    Compared with the \emph{a priori} known setting, the intruder's knowledge is incomparable since they leverage different information. On the other hand, this setting avoids the main technical difficulty caused by the coupling between the supervisor’s and the intruder’s information states, which leads to our decidability result.
    \item 
    Finally, under the assumption that the intruder’s observation is strictly finer than that of the supervisor, i.e., \(\Sigma_o \subseteq \Sigma_a\), we prove that  our new setting coincides with the standard \emph{a priori} known setting under this assumption. However, even in this case, the standard opacity control problem typically requires the additional assumption \(\Sigma_c \subseteq \Sigma_o\) \cite{dubreil2008opacity}, and the general case remains open to our knowledge. Therefore, our results also effectively solve the standard opacity-enforcement problem under \(\Sigma_o \subseteq \Sigma_a\) without further requiring that all controllable events be observable to the supervisor.
\end{itemize}

\subsection{Organization}
The rest of this paper is organized as follows. Section~\ref{section-pre} presents some necessary preliminaries. In Section~\ref{section-problem}, we formulate the opacity-enforcing control problem with \emph{a priori unknown} supervisors, focusing on the observation-triggered decision-issuance mechanism. Section~\ref{section-IS} investigates the evolution of information from both the intruder’s and the supervisor’s perspectives and introduces a novel information-state structure that captures both the supervisor’s knowledge and the intruder’s knowledge. In Section~\ref{section-alg}, we present a synthesis algorithm that first restricts the policy space to the IS-based control structure and then solves the resulting safety game; we also prove that the algorithm enforces opacity and is without loss of generality. In Section~\ref{section-change}, we further introduce the decision-triggered decision-issuance mechanism and solve the corresponding synthesis problem. Finally, Section~\ref{section-clu} concludes the paper.

\section{Preliminaries}\label{section-pre}

\subsection{System Model}

Let \(\Sigma\) be a finite set of events. A \emph{string} is a finite sequence of events, and we denote by \(\Sigma^*\) the set of all strings over \(\Sigma\) including the empty string \(\epsilon\). 
We denote $\Sigma^\epsilon=\Sigma\cup\{\epsilon\}$. 
For string \(s \in \Sigma^*\), the length of \(s\) is denoted by \(|s|\) with \(|\epsilon|=0\). A language \(L \subseteq \Sigma^*\) is a set of strings. For any string \(s\in L\), we denote by \(L/s\) the post-language of \(s\) in \(L\), i.e., \(L/ s:=\{w\in \Sigma^*: sw\in L\}\). The prefix-closure of \(L
\) is denote by \(\Bar{L}\), i.e., \(\Bar{L}=\{ u \in \Sigma^* : \exists v \in \Sigma^* \text{ s.t. } uv \in L \}\). A language \(L\subseteq\Sigma^*\) is said to be \emph{live} if \(\forall s\in L\), \(\exists\sigma\in\Sigma: s\sigma\in L\).

We consider a discrete-event system modeled by a deterministic finite-state automaton (DFA)
\[
G=(X,\Sigma,\delta,x_0),
\]
where \(X\) is the finite set of states, \(\Sigma\) is the finite set of events, \(\delta:X\times \Sigma \to X\) is the partial deterministic transition function such that \(\delta(x,\sigma)=x'\) means that there exists a transition from \(x\) to \(x'\) with event \(\sigma\), and \(x_0 \in X\) is the initial state.
The transition function \(\delta\) can also be extended to 
\(\delta: X\times \Sigma^*\rightarrow X\) 
recursively by: for any \(x\in X, s\in \Sigma^*, \sigma\in \Sigma\),
we have \(\delta(x, s\sigma) = \delta(\delta(x, s), \sigma)\) with \(\delta(x, \epsilon)=x\).
The language generated by \(G\) from state \(x\) is defined by 
\(\L(G,x)=\{s\in \Sigma^*:\delta(x,s)!\}\), where ``!" means ``is defined". 
The language generated by \(G\) is \(\L(G):=\L(G, x_0)\). For any \(s\in \L(G)\), we write \(\delta(x_0,s)\) simply as \(\delta(s)\). 
For any $x\in X$, we define $\Lambda(x)=\{\sigma\in \Sigma: \delta(x,\sigma)!\}$ as the set of active events at state $x$. For a set of states \(q\in 2^X\), we also write \(\bigcup_{x\in q}\Lambda(x)\) as \(\Lambda(q)\) for simplicity.

When the system is partially observed, \(\Sigma\) is partitioned as $\Sigma=\Sigma_o \dot{\cup} \Sigma_{uo}$,
where \(\Sigma_o\) is the set of observable events and \(\Sigma_{uo}\) is the set of unobservable events. 
The occurrence of each event is imperfectly observed through a natural projection \(P_o: \Sigma^* \to \Sigma_{o}^{*}\) defined as follows:
\begin{equation}
    P_o(\epsilon)=\epsilon \text{ and } \begin{aligned}
	P_o(s\sigma) = 
		\left\{
		\begin{array}{ll}
			 P_o(s)\sigma & \text{if}\quad    \sigma \in \Sigma_o  \\
			 P_o(s)                & \text{if}\quad    \sigma \in \Sigma_{uo} 
		\end{array}
		\right..   
\end{aligned} 
\end{equation}
The inverse projection \(P_o^{-1}:\Sigma^{*}_{o} \to 2^{\Sigma^*}\) is defined by \(P_o^{-1}(\alpha):=\{ s \in \Sigma^*: P_o(s)=\alpha \}\). For any observation \(\alpha\in P_o(\L(G))\), we define \(\mathcal{E}_{o}^{op}(\alpha)\) as the
current-state estimate that captures the set of all possible states
the (open-loop) system could be in currently when \(\alpha\) is observed under projection \(P_o\), i.e.,
\[
\mathcal{E}_{o}^{op}(\alpha)=\{ \delta(s) \in X: s \in P_o^{-1}(\alpha)\cap \L(G)\}.
\]

\subsection{Supervisory Control Theory}
In the supervisory control framework, a supervisor can restrict the behavior of the system \(G\) by dynamically disabling/enabling some system events. In this setting, the event set \(\Sigma\) is further partitioned as  $\Sigma= \Sigma_c \dot{\cup} \Sigma_{uc}$, 
where \(\Sigma_c\) is the set of controllable events and \(\Sigma_{uc}\) is the set of uncontrollable events. 
A control decision \(\gamma\in 2^\Sigma\) is said to be valid if \(\Sigma_{uc}\subseteq\gamma\), namely, uncontrollable events can never be disabled.
We define \(\Gamma=\{ \gamma \in 2^{\Sigma} : \Sigma_{uc} \subseteq \gamma \}\) as the set of valid control decisions. 
Since a supervisor can only make decisions based on its observations, a partial-observation supervisor is a function 
\[
S : P_o(\L(G)) \to \Gamma.
\]
We use the notation \(S/G\) to represent the controlled system and the language generated by \(S /G\), denoted by \(\L(S/G)\), is defined recursively in the following manners:
\begin{enumerate}
    \item \(\epsilon\in \L(S/G)\); and
    \item for any \(s\in \Sigma^*\), \(\sigma\in\Sigma\) we have \(s\sigma\in \L(S/G)\) iff \(s\sigma\in \L(G)\), \(s\in \L(S/G)\), and \(\sigma\in S(P_o(s))\).
\end{enumerate}
Since the supervisor knows its own control policy, 
then for any observation $\alpha\in P_o(\L/G)$, the 
\emph{controlled state estimate} w.r.t.\ $\Sigma_o$ and supervisor $S$ is 
\[
\mathcal{E}_{o}^S(\alpha)=\{ \delta(s) \in X: s \in P_o^{-1}(\alpha)\cap \L(S/G)\}.
\]

\subsection{Opacity under Passive Intruders}
We assume that system \(G\) has a ``secret", modeled as a set of secret states \(X_S\subseteq X\). 
We define the set of events whose occurrence can be observed by the intruder as \(\Sigma_a\subseteq\Sigma\). We also define \(\Sigma_{ua}=\Sigma\setminus\Sigma_a\) as the set of unobservable events from the intruder's perspective.
 The projection \(P_a:\Sigma^*\to\Sigma_a^*\), inverse projection \(P_a^{-1}:\Sigma_a^*\to 2^{\Sigma^*}\) and (open-loop)  state estimate \(\mathcal{E}_{a}^{op}:P_a(\L(G))\to 2^X\) are also defined in the same way as \(P_o\), \(P_o^{-1}\) and \(\mathcal{E}_{o}^{op}\), respectively. 
In general, \(\Sigma_a\) and \(\Sigma_o\) can be incomparable, i.e., both \(\Sigma_o\not\subseteq \Sigma_a\) and \(\Sigma_a\not\subseteq \Sigma_o\) may hold.

In opacity analysis, we consider a passive \emph{intruder} that has the following capabilities:
\begin{itemize}
    \item [\textbf{A1}:] 
    it knows the system model $G$; 
    \item [\textbf{A2}:] 
    it can eavesdrop on the occurrences of  events in $\Sigma_a$.
\end{itemize} 
A (open-loop) system is said to be current-state opaque if the intruder can never determine for sure that the system is currently in a secret state based on its observation. 
\begin{mydef}[\bf  Current-State Opacity]\upshape
    Given system \(G\), set of observable events \(\Sigma_a\) of the intruder, and set of secret states \(X_S\), we say \(G\) is current-state opaque (w.r.t. \(\Sigma_a\) and \(X_S\)) if \vspace{-3pt}
    \begin{align}
        (\forall s\in \L(G))[\mathcal{E}_{a}^{op}(P_a(s))\not\subseteq X_S].
    \end{align}
\end{mydef}\vspace{-3pt}

We finally define some operators that will be used later. 
The \emph{unobservable reach} of the supervisor for  state set \(q \subseteq X\) under control decision \(\gamma \in \Gamma\) is   \vspace{-3pt}
\[
\text{UR}_{\gamma,o}(q)=\{ \delta(x,w) \in X: x \in q, w \in (\Sigma_{uo}\cap \gamma)^* \}. \vspace{-3pt}
\]
We also define  the  unobservable reach of the intruder for  states \(q \subseteq X\) under
  \(\gamma \in \Gamma\)  w.r.t. $\Sigma_{ua}$ by \(\text{UR}_{\gamma,a}(q)\).
Moreover, we define 
\[
\text{UR}^{+}_{\gamma,a}(q)=\{ \delta(x,w) \!\in\! X: x \!\in\! q, w \!\in\! (\Sigma_{ua}\cap \gamma)^*\setminus \{\epsilon\} \} 
\]
as the intruder's unobservable reach in one or more steps, starting from the subset of states \(q \subseteq X\) under
the control decision \(\gamma \in \Gamma\).

The \emph{observable reach} of   state set \(q \subseteq X\) under the event \(\sigma \in \Sigma_o\cup\Sigma_a\) is given by \vspace{-3pt}
\[
\text{NX}_\sigma(q)=\{ \delta(x,\sigma)\in X: x \in q  \}.\vspace{-3pt}
\]

\section{Opacity with A Priori Unknown Supervisors}\label{section-problem}

When the open-loop system is not opaque, our objective is to design a supervisor such that the resulting closed-loop system is opaque. As discussed in the introduction, we consider the \emph{a priori unknown supervisor} setting, where the supervisor reveals to the intruder only the control decisions issued online, rather than its complete control policy \emph{a priori}.
More specifically, in addition to Assumptions A1 and A2, this setting is further characterized by the following two assumptions:
\begin{itemize}
\item[\textbf{A3}:]
The intruder does not know the supervisor’s control policy \emph{a priori}, but it can observe every control decision issued by the supervisor as the system evolves.
\item[\textbf{A4}:]
The supervisor issues a control decision \emph{immediately} upon the occurrence of each newly observed event.
\end{itemize}
Note that \textbf{A4} corresponds to the \emph{observation-triggered} decision-issuance mechanism, since the supervisor updates (and issues) a control decision after each observation, even when two successive decisions are identical. In some applications, to reduce bandwidth consumption, the supervisor may transmit a control decision to the actuator only when it needs to change the current decision applied. Such a decision-issuance mechanism is referred to as \emph{decision-triggered}, and it will be discussed in Section~\ref{section-change}. 
In this paper, we assume that the intruder is unaware of whether the supervisor employs an observation-triggered or a decision-triggered decision-issuance mechanism.  

Formally, based on the above setting, given  a supervisor \(S: P_o(\L(G))\to \Gamma\), 
the \emph{information-flow} of the controlled system from the intruder's perspective is captured as the function:
\[
\mathbb{P}_S: \L(S/G)\to (\Sigma_a^\epsilon  \times \Gamma^\epsilon)^* 
\]
such that 
\begin{itemize}
    \item[(i)]
     $ \mathbb{P}_S(\epsilon)=(\epsilon, S(\epsilon) )$;
and 
    \item[(ii)]
    for any string $s\sigma \in \L(S/G)$, we have
\begin{equation}
 \begin{aligned}
	\mathbb{P}_S(s\sigma) \!=\! 
		\left\{
		\begin{array}{ll}
			 \mathbb{P}_S(s)(\sigma ,\epsilon)            & \text{if }     \sigma \in \Sigma_a\setminus \Sigma_o  \\
			 \mathbb{P}_S(s) (\epsilon, S(P_o(s\sigma)))  & \text{if }     \sigma \in \Sigma_o\setminus \Sigma_a   \\
          \mathbb{P}_S(s) (\sigma, S(P_o(s\sigma)))    & \text{if }     \sigma \in \Sigma_o\cap \Sigma_a \\
          \mathbb{P}_S(s)  & \text{if }     \sigma \in \Sigma_{uo}\cap  \Sigma_{ua} 
		\end{array}
		\right..\!\!   
\end{aligned} 
\end{equation}
\end{itemize} 
We denote by $\mathcal{O}=\Sigma_a^\epsilon  \times \Gamma^\epsilon$ the observation set of the intruder. 
Intuitively, the information-flow function $\mathbb{P}_S$ maps each string $s$ to a sequence of observation pairs. In each pair $(e, \gamma)$, the first element $e \in \Sigma_a \cup \{\epsilon\}$ represents an observed event (or an empty observation if the event is unobservable), and the second element $\gamma \in \Gamma \cup \{\epsilon\}$ represents the control pattern issued by the supervisor (or an empty observation if no control update occurs). The above definition captures the following four cases, depending on event observability from the perspectives of the supervisor and the intruder:
\begin{itemize}
    \item 
     When \(\sigma \in \Sigma_a \setminus \Sigma_o\), only the intruder can observe the newly occurred event. Therefore, the event \(\sigma\) is appended to the information-flow, but no new control decision is issued by the supervisor as it observes nothing.
     \item 
     When \(\sigma \in \Sigma_o \setminus \Sigma_a\), only the supervisor can observe the newly occurred event. Therefore, the supervisor updates its control decision to \(S(P_o(s\sigma))\), which is available to the intruder. However, the intruder does not know which specific observation triggered this decision update.
     \item 
     When \(\sigma \in \Sigma_o \cap \Sigma_a\), both the supervisor and the intruder can observe the event. Therefore, the tuple \(  (   \sigma,\, S(P_o(s\sigma))   )     \) is appended to the information-flow.
     \item 
     When \(\sigma \in \Sigma_{uo} \cap \Sigma_{ua}\), neither the supervisor nor the intruder is aware of the event occurrence; hence, the information-flow remains unchanged.
\end{itemize}

We use the following example to illustrate the information-flow of the intruder with a priori unknown supervisors. 

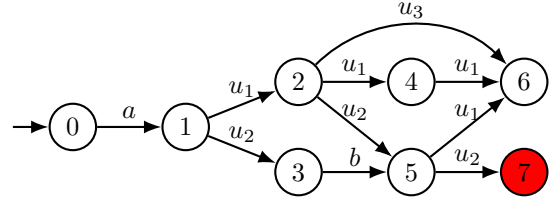
\begin{figure}
    \centering
       \begin{tikzpicture}[->,>={Latex}, thick, initial text={}, node distance=1.5cm, initial where=left, thick, base node/.style={circle, draw, minimum size=6mm}]  
   \node[state, initial, base node, ] (0) {$0$};
   \node[state, base node, ] (1) [right of=0] {$1$};
   \node[state, base node, ] (2) [xshift=3cm,yshift=0.6cm] {$2$};
   \node[state, base node, ] (3) [xshift=3cm,yshift=-0.6cm] {$3$};
   \node[state, base node, ] (4) [right of=2] {$4$};
   \node[state, base node, ] (5) [right of=3] {$5$};
   \node[state, base node, ] (6) [right of=4] {$6$};
   \node[state, base node, fill=red] (7) [right of=5] {$7$};
   
   \path[->]
   (0) edge node [yshift=0.2cm] {$a$} (1)
   (2) edge node [yshift=0.2cm] {$u_1$} (4)
   (3) edge node [yshift=0.2cm] {$b$} (5)
   (1) edge node [yshift=0.2cm] {$u_1$} (2)
   (1) edge node [yshift=0.2cm] {$u_2$} (3)
   (2) edge node [yshift=0.2cm] {$u_2$} (5)
   (5) edge node [yshift=0.2cm] {$u_2$} (7)
   (4) edge node [yshift=0.2cm] {$u_1$} (6)
   (5) edge node [yshift=0.2cm] {$u_1$} (6);

   \draw[->]
   (2) to [bend left=45] node [yshift=0.2cm] {$u_3$} (6);

   \end{tikzpicture}
    \caption{System \(G\), where \(\Sigma_o=\{u_1,u_2\}\), \(\Sigma_a=\{a,b\}\), \(\Sigma_c=\{u_1,u_2,u_3\}\).}
    \label{fig:motiexample}
\end{figure}

\begin{myexm}[Online Information-Flow]\label{example1}
 Let us consider  system \(G\)  shown in Figure~\ref{fig:motiexample}, where \(\Sigma_o=\{u_1,u_2\}\), \(\Sigma_a=\{a,b\}\), \(\Sigma_c=\{u_1,u_2,u_3\}\). Suppose \(G\) is controlled by a supervisor \(S:P_o(\L(G))\to \Gamma\) defined by: \(S(u_1)=S(u_1u_2)=\{a,b,u_2\}\), \(S(u_2)=S(u_2u_1)=\{a,b,u_1\}\), and \(S(\alpha)=\Sigma\) for any other \(\alpha\in P_o(\L(G))\). 
 For string \(a u_1 u_2 u_2\in\L(S/G)\), 
 the information-flow of the intruder is 
\[\mathbb{P}_S(a u_1 u_2 u_2)=
(\epsilon,\Sigma) (a,\epsilon)(\epsilon, \{a,b,u_2\})(\epsilon, \{a,b,u_2\})
(\epsilon,\Sigma).\]
 Note that, since there is no event $\sigma\in \Sigma_o\cap \Sigma_a$ in this example, 
 there is no element of form $(\sigma,\gamma)$ in the information-flow.
\end{myexm}

Based on its information-flow, the intruder can also estimate the system state. Note that, since the intruder can only access the online control decision issued at each step, it has no \emph{a priori} knowledge of the supervisor’s observable event set \(\Sigma_o\) or controllable event set \(\Sigma_c\), let alone the supervisor’s control policy.
Therefore, from the intruder’s point of view, the system may be controlled by an arbitrary supervisor, as long as it can generate an information-flow compatible with the intruder’s observations. We denote by \(\mathbb{S}\) the set of all possible supervisors with arbitrary event sets \(\Sigma_o'\) and \(\Sigma_c'\). Then, the state estimate of the intruder, which only knows online-released control decisions, is defined as follows.

\begin{mydef}[\bf Controlled State Estimate]\upshape\label{def:on-curr-est}
Given system \(G\),  for any information-flow \(\alpha\in \mathbb{P}_S(\L(S/G))\) generated under some supervisor $S\in \mathbb{S}$, which is unknown a priori, 
 we define the intruder's \emph{controlled state estimate} of the closed-loop system as
    \begin{equation}\label{eq:def-cse}
        \mathcal{E}_{a}(\alpha)\!=\!\{\delta(s)\in X: \exists S'\in \mathbb{S}\text{ s.t. }\mathbb{P}_{S'}(s)=\alpha\}. 
    \end{equation}
\end{mydef}

We use the following example to illustrate the controlled state estimate  of the closed-loop system and compare it with the open-loop state estimate.

\begin{myexm}[Controlled State Estimate]\label{example2}
Let us  still consider system \(G\) in Figure~\ref{fig:motiexample} and assume that the system is controlled by  supervisor \(S\) in Example~\ref{example1}, which is unknown \emph{a priori} to the intruder. 
Suppose the string executed is \(a u_1 u_2 u_2\in \L(G)\).
Without the information of control decisions, 
the intruder observes \(P_a(a u_1 u_2 u_2)=a\). Then we have 
the open-loop state estimate
$\mathcal{E}_{a}^{op}(a)=\{1,2,3,4,5,6,7\}$.
By additionally observing the online released control decisions, 
the information-flow of the intruder is  
\(\mathbb{P}_S(a u_1 u_2 u_2)=(\epsilon,\Sigma)(a,\epsilon) (\epsilon,\{a, b, u_2\})(\epsilon,\{a, b, u_2\})(\epsilon,\Sigma)\)
and the controlled state estimate is $\mathcal{E}_{a}(\mathbb{P}_S(a u_1 u_2 u_2))=\{7\}$, i.e.,  the intruder knows for sure that the current state is \(7\). 
This is because three control decisions are released after the intruder observes event \(a\), implying that at least three events in $\Sigma_o$ have occurred after \(a\). According to the system structure, the current state must belong to \(\{6,7\}\). Furthermore, the second-to-last released control decision is \(\{a,b,u_2\}\), which indicates that events \(u_1\) and \(u_3\) were disabled before the intruder observes the final control decision \(\Sigma\). The latter must be issued upon the transitions from states \(4,5\) to \(6,7\). Therefore, the current state cannot be \(6\), since state \(6\) can only be reached via event \(u_1\).
\end{myexm}

Similarly, for the case where the intruder can access the online control decisions,  we say the closed-loop system is opaque if the intruder cannot, based on its information-flow, determine with certainty that the system is in a secret state.

\begin{mydef}[\bf Opacity with A Priori Unknown Supervisors]\upshape\label{def:online-cur-opa}    
Given system $G$ and supervisor  \(S\), 
we say that the closed-loop system $S/G$  is opaque (w.r.t. \(\Sigma_a\) and \(X_S\)) if
    \begin{align}
        (\forall s\in \L(S/G))[\mathcal{E}_{a}(\mathbb{P}_S(s))\not\subseteq X_S].
    \end{align}
\end{mydef}

Our objective is to synthesize a supervisor that restricts the system behavior so that the resulting closed-loop system is opaque under the \emph{a priori unknown supervisor} setting. Hereafter, we no longer consider the open-loop opacity case, and the term opacity always refers to the closed-loop setting.

\begin{myprob}[\bf Opacity-Enforcing Supervisory Control Problem]\upshape\label{prob:online-opa-enf}
Given open-loop system \(G\) and secret states \(X_S\subseteq X\), synthesize a supervisor \(S:P_o(\L(G))\to\Gamma\) such that the closed-loop system \(S/G\) is opaque with the synthesized a priori unknown \(S\) (w.r.t. \(\Sigma_a\) and \(X_S\)).
\end{myprob}

Before formally presenting our synthesis procedure, we provide a candidate solution to the running example to better illustrate the problem setting.

\begin{myexm}[Opacity-Enforcing Supervisor]\label{exm:opa-enforcing}
We still consider the running example with secret state \(X_S=\{7\}\). 
Clearly,  the supervisor \(S\) described in Example~\ref{example1} is not opaque since
\(\mathcal{E}_{a}(\mathbb{P}_S(a u_1 u_2 u_2))=\{7\}\subseteq X_S\).
A possible solution to enforce opacity is to modify \(S\) to \(S'\) by changing \(S(u_1u_2)=\{a,b,u_2\}\) to \(S'(u_1u_2)=\Sigma\), while keeping the remaining control decisions unchanged. To see how \(S'\) enforces opacity, consider the (unique) string \(au_1u_2u_2 \in L(S'/G)\)   leading to the secret state \(7\).
Along this string, the information-flow is 
\[
    \mathbb{P}_{S'}(au_1u_2u_2)=(\epsilon,\Sigma) (a,\epsilon)(\epsilon,\{a,b,u_2\}) (\epsilon,\Sigma)(\epsilon,\Sigma). 
\]
On the other hand,  for string $au_1u_2u_1\in\L(S'/G)$ leading to no-secret state $6\notin X_S$, we also  have 
    \[
    \mathbb{P}_{S'}(au_1u_2u_1)=(\epsilon,\Sigma) (a,\epsilon)(\epsilon,\{a,b,u_2\}) (\epsilon,\Sigma)(\epsilon,\Sigma).
    \] 
Therefore, we have 
\(\mathcal{E}_{a}(\mathbb{P}_{S'}(au_1u_2u_2))=\{6,7\}\not\subseteq X_S\)  
and \(S'/G\) is opaque under online-released control decisions.
\end{myexm}

\begin{remark}[Comparison with the A Priori Known Setting]\upshape
When the control policy of the supervisor \(S\) is fully known \emph{a priori} to the intruder, but the intruder cannot access the online control decisions issued by the supervisor, for any executed string \(s\in \mathcal{L}(S/G)\),  the state estimate of the closed-loop system from the intruder’s perspective is defined as
\[
\mathcal{E}_a^S(P_a(s))
=\{
\delta(t)\in X:
t\in \mathcal{L}(S/G)\wedge
P_a(s)=P_a(t)\}.
\]
In general, the two state estimates \(\mathcal{E}_a^S(P_a(s))\) and \(\mathcal{E}_a(\mathbb{P}_{S}(s))\) are incomparable, since they leverage different information: 
\begin{itemize}
    \item 
    The \emph{a priori unknown} setting allows the intruder to precisely observe the control decision currently being applied, but it does not know which control decisions will be issued in the future, as they may change through events that are unobservable to the intruder.
    \item 
    By contrast, in the \emph{a priori known} supervisor setting, the intruder can further leverage the supervisor’s complete control policy, but it may still face uncertainty about which specific control decision is currently applied.
\end{itemize}
Yet, when the intruder’s observable event set is larger than the supervisor’s, i.e., 
\(\Sigma_o \subseteq \Sigma_a\), these two state estimates coincide. 
We do not provide a formal proof here due to space constraints; however, the intuition is straightforward.
Specifically, under the assumption \(\Sigma_o \subseteq \Sigma_a\), the intruder knows precisely when the supervisor updates its control decision. Hence, if the supervisor’s control policy is known \emph{a priori}, the intruder can combine this supervisory model with its online observations to uniquely determine the current control decision being applied. On the other hand, since \(\Sigma_o \subseteq \Sigma_a\), the control decision cannot change within the intruder’s unobservable reach. Therefore, knowing the supervisor’s future decision logic offline brings no additional information compared with simply knowing the current decision online.
In other words, under \(\Sigma_o \subseteq \Sigma_a\), our problem coincides with the standard opacity supervisory control problem with an \emph{a priori} known supervisor. However, to our knowledge, existing results for this setting still require the additional assumption \(\Sigma_c \subseteq \Sigma_o\), which is no longer needed in our approach.\qed
\end{remark}

\section{Information Structures with A Priori Unknown Supervisors }\label{section-IS}
In this section, we investigate how information evolves in the closed-loop system under our setting. We first study the information evolution from the intruder’s point of view. We then propose a new information structure for control synthesis, over whose state space the supervisors will be realized.

\subsection{Information Evolution from The Intruder's Perspective}

We first investigate the evolution of information from  the intruder’s perspective. 
Suppose that the plant is controlled by  supervisor \(S:P_o(\L(G))\to\Gamma\). 
Under the setting where control decisions are released online at each decision instant, the intruder’s state estimate updates as follows:
\begin{itemize}
\item 
Initially, starting from \(x_0\), the intruder observes the initial control decision \(\gamma_0\in\Gamma\) and updates its state estimate to 
\begin{equation}
    \hat{q}_{a,0}=\text{UR}_{\gamma_0,a}(  \{x_0\} )
\end{equation} 
by considering all unobservable strings in $(\gamma_0\cap\Sigma_{ua})^*$.
\item 
Subsequently, depending on the observation  type it receives, the intruder proceeds in one of the following ways:
\begin{itemize}
\item [1)] 
If a new event  without control decision   is observed, i.e., \((\sigma_1,\epsilon)\in\mathcal{O}\), 
 it updates the state estimate to  
\begin{equation}\label{eq:case1}
\hat{q}_{a,1}=\text{UR}_{\gamma_0,a}(\text{NX}_{\sigma_1}(\hat{q}_{a,0}))
\end{equation} 
by only considering the occurrence of \(\sigma_1\) and keep the control decision \(\gamma_0\) unchanged;
\item [2)] 
If  a new event together with a control decision    is observed, i.e., \((\sigma_1,\gamma_1)\in\mathcal{O}\), then it knows that a new control decision \(\gamma_1\) is issued by the supervisor upon the occurrence of \(\sigma_1\).
Therefore, the intruder should update its state estimate to
\begin{equation}
\hat{q}_{a,1}=\text{UR}_{\gamma_1,a}(\text{NX}_{\sigma_1}(\hat{q}_{a,0})).
\end{equation} 
Compared with Eq.~\eqref{eq:case1}, the only difference is that the unobservable reach should be computed based on \((\gamma_1\cap\Sigma_{ua})^*\) rather than 
\((\gamma_0\cap\Sigma_{ua})^*\);
\item [3)] 
If only a new control decision    is observed, i.e., \((\epsilon,\gamma_1)\in\mathcal{O}\),
the intruder should update its state estimate  to 
\[
\hat{q}_{a,1}=\text{UR}_{\gamma_1,a}(\text{UR}^{+}_{\gamma_0,a}(\hat{q}_{a,0})).
\] 
Here, one must first account for the occurrence of at least one unobservable event of the intruder in
\((\gamma_0\cap\Sigma_{ua})^*\), and then further compose it with the unobservable reach within \((\gamma_1\cap\Sigma_{ua})^*\).
\end{itemize} 
\end{itemize} 
The above process continues indefinitely as new information  $(\sigma,\gamma)\in\mathcal{O}$ arrives, thereby inducing the  
recursive state estimate  from the perspective of the intruder.

To formalize the recursive computation, let $G$ be the system  and \(S: P_o(\L)\to\Gamma\) be a supervisor. 
For any string 
\(s=\sigma_1\cdots\sigma_n\in \mathcal{L}(S/G)\), 
we denote by $\hat{\mathcal{O}}=\Sigma^\epsilon   \times \Gamma $ the set of event-decision tuple, 
and define the 
\textbf{augmented string} of string $s$ (under supervisor $S$) by
\[
\hat{s}\!=\!
(\epsilon,S(\epsilon))
(\sigma_1,S(P_o(\sigma_1) ))
\cdots
(\sigma_n,S(P_o(\sigma_1\cdots\sigma_n) ))
\!\in\! \hat{O}^*.
\]
That is, each event is augmented with the current control decision. Note that not all event or decision components in the augmented string are observable to the intruder, and therefore may not trigger an update to the state estimate.
We propose the following finite structure to compute the recursive state estimate along augmented strings.
\begin{mydef}[\bf Intruder State Estimator]\upshape\label{def:estimator}\
Given system $G$, the state estimator of the intruder with a priori unknown supervisor is defined as a 4-tuple
\[
\mathfrak{A} = 
(M,\hat{\mathcal{O}}, f, m_0),
    \]
where:
\begin{itemize}
    \item 
    $M \subseteq  (X \times 2^X \times \Gamma)\cup\{m_0\}$  is the set of states, 
    where $m_0$ is the unique initial state; 
    \item 
    $\hat{\mathcal{O}}=\Sigma^\epsilon   \times \Gamma $ is the set of events; 
    \item \(f:M\times \hat{\mathcal{O}}\to M\) is the deterministic transition function defined as follows:
    \begin{itemize}[leftmargin=*,itemsep=0pt,parsep=0pt]
        \item[1)]  
        For  initial state $m_0$ and each control decision \(\gamma \in \Gamma\),  we define
    \begin{equation}\label{eq:9}
          f(m_0, (\epsilon,\gamma) )
       =  (x_0,\text{UR}_{\gamma,a}(\{x_0\}), \gamma).
    \end{equation}
        \item[2)]  
   For each state $m=(x,q,\gamma)\in M $, event  \(\sigma\in\Lambda(x)\cap\gamma\)
    and control decision  \(\gamma' \in \Gamma\),  we define 
   $f(m, (\sigma,\gamma') )=( \delta(x,\sigma), q', \gamma'       )$, 
   where 
   \begin{equation}\label{eq:10}
 \begin{aligned}
	q' \!=\! 
		\left\{
		\begin{array}{ll}
		    \text{UR}_{\gamma,a}(\text{NX}_\sigma(q))
            & \text{if }     \sigma \in \Sigma_a\setminus \Sigma_o  \\
			  \text{UR}_{\gamma',a}(\text{UR}^{+}_{\gamma,a}(q))
            & \text{if }     \sigma \in \Sigma_o\setminus \Sigma_a   \\
          \text{UR}_{\gamma',a}(\text{NX}_\sigma(q))
          & \text{if }     \sigma \in \Sigma_o\cap \Sigma_a \\
          q  & \text{if }     \sigma \in \Sigma_{uo}\cap  \Sigma_{ua} 
		\end{array}
		\right..\!\!   
\end{aligned} 
\end{equation}
\end{itemize}
\end{itemize}
\end{mydef}

Intuitively, for each state \((x, q, \gamma ) \in M\) in $\mathfrak{A} $, the first component \(x \in X\) represents the actual state of the plant, the second component \(q \in 2^X\) represents the intruder's state estimate, and the last component \(\gamma \in \Gamma\) represents the current control decision being applied.
Therefore, for each new event \((\sigma, \gamma') \in \hat{\mathcal{O}}\), \(\sigma \in \Sigma\) represents the actual event that occurs in the plant, and \(\gamma'\) represents the new control decision applied upon the occurrence of \(\sigma\). As a result, the plant's state component updates to \(\delta(x, \sigma) \) according to the system dynamics, while the last component records the new control decision \(\gamma'\).
The second component of the intruder's state estimate updates from \(q\) to \(q'\) based on the cases discussed above, categorized by event observability. Note that \(m_0\) is the initial state, from which only events of the form \((\epsilon, \gamma)\) are defined; this captures the fact that the supervisor makes an initial control decision before any events occur.

The following result establishes the correctness of the proposed state estimator, 
which states that, for any string, 
the state estimator of the intruder based on the information-flow, 
is exactly the second component of the state reached by the augmented string in the state estimator. 

\begin{mypro}\upshape\label{prop:1}
Let $G$ be the system 
and \(S: P_o(\L)\to\Gamma\) be a supervisor. 
For any string 
\(s\in \mathcal{L}(S/G)\),  we have 
\begin{equation}
 f(\hat{s})=(\delta(s), \mathcal{E}_{a}(\mathbb{P}_S(s)),S(P_o(s))).    
\end{equation}
\end{mypro}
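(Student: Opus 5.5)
We argue by induction on the length of $s$, tracking the three components of $f(\hat{s})$ separately. The first and third components are immediate. By Eq.~\eqref{eq:9} and the second case of the definition of $f$, the first coordinate evolves exactly as $\delta$ along $s$, and the third records the decision attached to the last tuple of $\hat{s}$, namely $S(P_o(s))$. Definedness holds because $s\in\L(S/G)$ gives $\sigma_{k+1}\in\Lambda(\delta(\sigma_1\cdots\sigma_k))\cap S(P_o(\sigma_1\cdots\sigma_k))$. The substance of the proof is the second component: we must show that the recursion \eqref{eq:10}, driven only by the actual decisions of $S$, computes $\mathcal{E}_a(\mathbb{P}_S(s))$, which is defined through \emph{all} supervisors $S'\in\mathbb{S}$ with arbitrary event sets.

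The key step is a characterization of Definition~\ref{def:on-curr-est} directly in terms of strings. For an information-flow $\alpha=(\epsilon,\gamma_0)(e_1,g_1)\cdots(e_k,g_k)$, we claim $\delta(t)\in\mathcal{E}_a(\alpha)$ if and only if $t$ admits a factorization into ``decision segments'' compatible with $\alpha$. In such a factorization, each step of $\alpha$ is matched with one event of $t$, and the events between matched steps lie in $\Sigma_{ua}$ and belong to the currently displayed decision. A step $(e,\epsilon)$ is matched with $e\in\Sigma_a$ under the old decision; a step $(e,g)$ with $e\in\Sigma_a$ switches the decision to $g$; a step $(\epsilon,g)$ is matched with some $\Sigma_{ua}$ event, which is then declared observable to the fictitious supervisor and switches the decision to $g$.

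For the ``only if'' direction, we read off such a factorization from $\mathbb{P}_{S'}(t)=\alpha$. For the ``if'' direction, we construct $S'$ explicitly. We choose $\Sigma_o'$ to contain the matched intruder-unobservable events, and we choose the dichotomy between $\Sigma_o'\cap\Sigma_a$ and $\Sigma_a\setminus\Sigma_o'$ according to whether a decision accompanies each step. We take $\Sigma_c'=\Sigma$, and define $S'$ along $P_{o'}(t)$ to output the prescribed $g_i$.

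\textbf{Main obstacle.} The difficulty is consistency of $S'$. Arbitrary event sets are needed, but a single event symbol may play different roles at different steps of $\alpha$: for example, some $u\in\Sigma_{ua}$ may be matched to an $(\epsilon,g)$ step at one point and occur silently elsewhere in $t$, and a given $e\in\Sigma_a$ may appear both with and without an accompanying decision. A fixed $\Sigma_o'$ cannot accommodate both roles. I would first try to exploit that $\mathbb{S}$ is understood per string: $S'$ is only required to produce $\alpha$ on the one string $t$, and its values off $P_{o'}(\L(G))$-paths are irrelevant. If that is not enough, I would observe that the four-case definition of $\mathbb{P}_S$ is only well posed if $\mathbb{S}$ allows per-step observability. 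Under that reading, the segment characterization holds verbatim, and the paper's remark that the intruder knows nothing about $\Sigma_o,\Sigma_c$ is what I would cite to justify it. Uncontrollability causes no trouble, because validity of $g_i$ is given.

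With this characterization in hand, the induction step reduces to a direct comparison with the four cases of \eqref{eq:10}. The set of endpoints of compatible factorizations extends by $\text{NX}_\sigma$ followed by $\text{UR}_{\gamma,a}$ for $(\sigma,\epsilon)$, and by $\text{NX}_\sigma$ followed by $\text{UR}_{\gamma',a}$ for $(\sigma,\gamma')$. For $(\epsilon,\gamma')$ it extends by $\text{UR}^+_{\gamma,a}$ followed by $\text{UR}_{\gamma',a}$, because the matched event is a nonempty final piece of a $\Sigma_{ua}\cap\gamma$ run, which the previous $\text{UR}_{\gamma,a}$ already absorbed. Finally, it is unchanged for events in $\Sigma_{uo}\cap\Sigma_{ua}$, since the earlier unobservable reach already contains them. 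The base case is Eq.~\eqref{eq:9} against $\mathbb{P}_S(\epsilon)=(\epsilon,S(\epsilon))$.
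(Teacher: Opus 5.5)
You have found the real crux, but neither of your fallbacks resolves it. Under the paper's literal definition of $\mathbb{S}$, where each $S'$ has \emph{fixed} sets $\Sigma_o',\Sigma_c'$, the statement is false, so no argument can close the gap.

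The $\Sigma_a$ half of your worry is harmless. In $\alpha=\mathbb{P}_S(s)$, each $e\in\Sigma_a$ occurs always with a decision or always without one, so taking $\Sigma_o'\cap\Sigma_a=\Sigma_o\cap\Sigma_a$ works.

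The $\Sigma_{ua}$ half is fatal. Restricting to a single witness string does not help, because the conflict arises inside that string. Here is a counterexample:
\begin{itemize}
\item Let $G$ be $0\xrightarrow{u}1\xrightarrow{u}2$ with $\Sigma=\Sigma_o=\Sigma_c=\{u\}$, $\Sigma_a=\emptyset$, $S\equiv\{u\}$ and $s=u$.
\item Then $\mathbb{P}_S(u)=(\epsilon,\{u\})(\epsilon,\{u\})$, and \eqref{eq:10} gives the second component $\text{UR}_{\{u\},a}(\text{UR}^{+}_{\{u\},a}(\{0,1,2\}))=\{1,2\}$.
\item State $2$ is reached only by $t=uu$. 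On $t=uu$, any $S'$ with fixed $\Sigma_o'$ emits one pair if $u\notin\Sigma_o'$ and three pairs if $u\in\Sigma_o'$, never two.
\item Hence $\mathcal{E}_a(\mathbb{P}_S(u))=\{1\}\neq\{1,2\}$.
\end{itemize}

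Your second fallback rests on a false premise. The four-case definition of $\mathbb{P}_{S'}$ is well posed for every fixed $\Sigma_o'$. The intruder's ignorance of $\Sigma_o$ is ignorance of a fixed set; it does not let that set vary along a string. Occurrence-dependent observability is therefore an \emph{enlargement} of $\mathbb{S}$. It must be imposed as a hypothesis, or built into Definition~\ref{def:on-curr-est}; it cannot be derived from the paper's setup.

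With $\mathbb{S}$ enlarged in this way, your segment characterization and your induction step are correct. They follow the paper's own route: induction on $|\hat s|$, with one case of \eqref{eq:10} per observability class.

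The appendix proof asserts, case by case, that the admissible continuations of a prefix witness $t'$ are exactly $w\in\{\sigma\}(\Sigma_{ua}\cap\gamma)^*$, $w\in((\Sigma_{ua}\cap\gamma)^*\setminus\{\epsilon\})(\Sigma_{ua}\cap\gamma')^*$, and so on. It does so without regard to which supervisor witnessed $t'$. That step is valid only under the same liberal reading, which the paper never states.

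So your analysis is more careful than the paper's: you also check definedness of $f$ and handle the $\Sigma_{uo}\cap\Sigma_{ua}$ case, both of which the paper skips. Still, as a proof of the proposition as stated, it has a gap that cannot be filled, exactly at your ``main obstacle''. The correct repair is to change the definition of $\mathbb{S}$ explicitly, not to argue around it.
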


\begin{proof}
See the Appendix. 
\end{proof}

We illustrate the intruder state estimator $\mathfrak{A}$ by the following example.

\begin{myexm}[Intruder's Information-Flow]
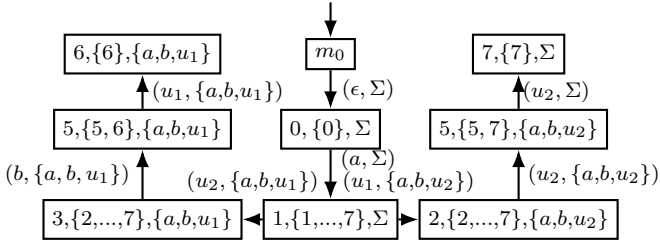
\begin{figure}
    \centering
        \begin{tikzpicture}[->,>={Latex}, thick, initial text={}, node distance= 1 cm, initial where=above, thick, O node/.style={rectangle, draw, minimum size=4mm, font=\footnotesize}]    
   \node[initial, state, O node] (0) { $ m_0$} ;  
   \node[state, O node, ] [below of = 0] (1) {\(0,\{0\},\Sigma\)};   
   \node[state, O node, ] [yshift=-2.2cm] (2) {\(1,\!\{1,\!...,\!7\},\!\Sigma\)};
   \node[state, O node, ] [xshift=-2.5cm,yshift=-2.2cm] (3) {\(3,\!\{2,\!...,\!7\},\!\{a,\!b,\!u_1\}\)};
   \node[state, O node, ] [xshift=2.5cm,yshift=-2.2cm] (4) {\(2,\!\{2,\!...,\!7\},\!\{a,\!b,\!u_2\}\)};
   \node[state, O node, ] [xshift=-2.5cm,yshift=-1cm] (5) {\(5,\!\{5,6\},\!\{a,\!b,\!u_1\}\)};
   \node[state, O node, ] [xshift=2.5cm,yshift=-1cm] (6) {\(5,\!\{5,7\},\!\{a,\!b,\!u_2\}\)};
   \node[state, O node, ] [above of=5] (7) {\(6,\!\{6\},\!\{a,\!b,\!u_1\}\)};
   \node[state, O node, ] [above of=6] (8) {\(7,\!\{7\},\!\Sigma\)};
   
   \path[->]
   (0) edge node [xshift=0.5cm] {\footnotesize{\((\epsilon,\Sigma)\)}} (1)
   (1) edge node [xshift=0.5cm,yshift=0.2cm] {\footnotesize{\((a,\Sigma)\)}} (2)
   (2) edge node [yshift=0.5cm] {\footnotesize{\((u_2,\{a,\!b,\!u_1\})\)}} (3)
   (2) edge node [yshift=0.5cm] {\footnotesize{\((u_1,\{a,\!b,\!u_2\})\)}} (4)
   (3) edge node [xshift=-1cm] {\footnotesize{\((b,\{a,b,u_1\})\)}} (5)
   (4) edge node [xshift=1cm] {\footnotesize{\((u_2,\{a,\!b,\!u_2\})\)}} (6)
   (5) edge node [xshift=1cm] {\footnotesize{\((u_1,\{a,\!b,\!u_1\})\)}} (7)
   (6) edge node [xshift=0.5cm] {\footnotesize{\((u_2,\Sigma)\)}} (8)
   ;
   \end{tikzpicture}
    \caption{Partial representation of online current-state estimator \(\mathfrak{A}\).}
    \label{fig:state-estimator}
\end{figure}
Let us still consider the system \(G\) as shown in Figure~\ref{fig:motiexample} and the supervisor \(S\) in Example~\ref{example1}. 
Part of the overall structure \(\mathfrak{A}\) is presented in Figure~\ref{fig:state-estimator}, where we only show states reached by augmented strings under the given supervisor $S$. 
Let us consider string  \(s=au_1u_2u_2\in \L(S/G)\). 
Then its augmented string is 
\[
\hat{s}=(\epsilon,\Sigma )(a,\Sigma)(u_1,\{a, b, u_2\})(u_2,\{a, b, u_2\})(u_2,\Sigma)
\]
and the information-flow of the intruder is 
\[
\mathbb{P}_S(a u_1 u_2 u_2)=(\epsilon,\Sigma) (a,\epsilon) (\epsilon,\{a, b, u_2\})(\epsilon,\{a, b, u_2\})(\epsilon,\Sigma).
\]
Upon the first event 
$(\epsilon,\Sigma) $,  
we have
$\text{UR}_{\Sigma,a}(\{0\})=\{0\}$, i.e., 
the state estimator reaches the state
$m_1=(0,\{0\},\Sigma)$. 
Then, upon event $(a,\Sigma)$ in $\mathfrak{A}$ 
(note that the intruder observes $(a,\epsilon)$ since no new control decision is issued), 
we have
\[
    \text{UR}_{\Sigma,a}(\text{NX}_a(\{0\}))=\text{UR}_{\Sigma,a}(\{1\})=\{1,2,3,4,5,6,7\}, 
    \] 
i.e., \(\mathfrak{A}\)  reaches state
$m_2=(1,\{1,2,3,4,5,6,7\},\Sigma)$. 
Next, upon event $ (\epsilon,\{a, b, u_2\})$, 
we have 
\begin{align}
    &\text{UR}_{\{a,b,u_2\},a}(\text{UR}^{+}_{\Sigma,a}(\{1,2,3,4,5,6,7\}))\nonumber\\
    =&\text{UR}_{\{a,b,u_2\},a}(\{2,3,4,5,6,7\})=\{2,3,4,5,6,7\}\nonumber
\end{align} 
and the state estimator reaches the state
$m_3\!=\!(2,\{2,3,4,5,6,7\},$ $\{a, b, u_2\})$. 
Similarly, 
upon events 
$(\epsilon,\{a, b, u_2\})$ and $(\epsilon,\Sigma)$, 
the state estimator reaches 
$m_4=(5,\{5,7\},\{a, b, u_2\})$ 
and 
$m_5=(7,\{7\},\Sigma)$ in sequence. 
As we can see, the second component of $m_5$ is exactly the state estimate of the intruder upon the information-flow
$\mathbb{P}_S(a u_1 u_2 u_2)$,  which aligns with our previous result in Example~\ref{example2}.
\end{myexm}

\subsection{Information Structure of the Supervisor}
Now, we tackle the main problem of what information structure is needed to realize an opacity-enforcing supervisor. Since the observations of the supervisor and the intruder are generally incomparable, the supervisor also remains uncertain about the intruder's knowledge. Therefore, the supervisor must estimate the intruder’s state estimate from its own perspective.
To this end, we define a set of the intruder’s state estimates as an \textbf{information state}. Thus, 
\begin{equation}
\mathbb{I} = 2^M = 2^{(X \times 2^X \times \Gamma)\cup\{m_0\}}    
\end{equation}
is the set of information states.
Specifically, 
an information state $\imath\in \mathbb{I}$ is of form
\begin{equation}\label{eq:ISdef}
    \imath=\{ \underbrace{(x_1,q_1,\gamma_1)}_{m_1},\underbrace{(x_2,q_2,\gamma_2)}_{m_2},\dots, \underbrace{(x_k,q_k,\gamma_k)}_{m_k}  \}.
\end{equation}
We denote by 
\begin{align}
    X(\imath)=\{x_1,\dots,x_k\}
\text{ and }
    Q(\imath)=\{q_1,\dots,q_k\}
\end{align}
the set of plant states 
and the set of state estimates in $\imath$.
We say an information state $\imath\in \mathbb{I}$ of form Eq.~\eqref{eq:ISdef}
is \emph{consistent} if $\gamma_i=\gamma_j,\forall i,j$. 
If $\imath$ is consistent, then we denote by $\Gamma(\imath)$ the unique control decision in each state estimate within $\imath$. 
Since we will use the information state to represent the knowledge of the supervisor, and both the supervisor and the intruder always know the control decision applied, all information states encountered hereafter are consistent by nature.

Now, let us evaluate the evolution of the supervisor's knowledge, i.e., its information state, upon each new piece of information received.
Suppose that the supervisor's current information state is $\imath\in \mathbb{I}$ of form Eq.~\eqref{eq:ISdef}. 
When it observes a new event $\sigma\in \Sigma_o$ and issues a control decision $\gamma\in \Gamma$, 
the supervisor should update its knowledge of the intruder’s estimate by considering how each possible intruder’s state estimate is updated based on the same information. This is captured by the following operator:
\begin{equation}
   \mathbb{NX}_{(\sigma,\gamma)}(\imath)
=\{  m'\!\in\! M: \exists m\!\in\!    \imath\text{ s.t. } m'\!=\!f(m,   (\sigma,\gamma) )  \}.
\end{equation}
The above knowledge update is only one step, as the intruder may further observe events in $\Sigma_a\setminus \Sigma_o$. 
Therefore, the supervisor should also account for the subsequent estimate updates of the intruder that are silent from its own perspective.
To this end, for any information state   $\imath\in \mathbb{I}$, 
we further define the following operator: 
\begin{equation}
   \mathbb{UR}_\gamma(\imath)
=\left\{  m'\!\in\! M: \!\!\!\!\!\!\!\!
\begin{array}{cc}
     &  \exists m\!\in\! \imath, w\!\in\!  (  ( \Sigma_{uo}\cap\gamma)  \times 
     \{\gamma\}    )^*\\
     & \text{ s.t. }  
m'\!=\!f(m,  w )
\end{array}\!\!\!\!
\right\}. 
\end{equation}
Intuitively, for each possible intruder's state estimate \(m\), before the supervisor observes the next event, strings in \((\Sigma_{uo} \cap \gamma)^*\) may occur. These events are unobservable to the supervisor but have been enabled. Note that the control decision within the unobservable reach remains unchanged. Therefore, all unobservable events are augmented with the same decision \(\gamma\) in \(w\). This ensures that the supervisor maintains consistency in the control decisions applied, even though it cannot observe these unobservable events.

In what follows, we restrict our attention to the IS-based supervisor, which makes decisions solely based on the current information state rather than the entire history. 
Under such supervisors, the evolution of the information state can be uniquely determined using finite memory. To formalize this, we introduce the following IS-based control structure.

\begin{mydef}[\bf IS-based Control Structures]\upshape\label{def:IS-Cstructure}
An information-state-based (IS-based) control structure is a tuple
\[
    \S=(\mathbb{I}^\S_D, \mathbb{I}^\S_O,h^\S_{OD},h^\S_{DO},\imath^\S_0),
\]
where
\begin{itemize}
 \item 
 \(\mathbb{I}^\S_O \subseteq\mathbb{I}\) is the set of observation-states;
 \item 
 \(\mathbb{I}^\S_D \subseteq\mathbb{I}\times \Sigma^\epsilon_o\)  is the set of decision-states;
 \item 
 $h^\S_{OD}: \mathbb{I}^\S_O\times \Sigma_o\to \mathbb{I}^\S_D$
 is the transition function from observation-states to decision-states 
 such that:
 for every \(\imath\in\mathbb{I}^\S_O\) and \(\sigma\in\Sigma_o\cap \Gamma(\imath)\), 
 we define
 \begin{equation}
  h^\S_{OD}(\imath,\sigma)=   (\imath, \sigma).
 \end{equation}
 \item 
 $h^\S_{DO}: \mathbb{I}^\S_D\times \Gamma \to \mathbb{I}^\S_o$
 is the transition function from decision-states to observation-states 
 such that: 
 for every  \((\imath,\sigma)\in \mathbb{I}^\S_D \), 
 there exists a unique decision \(\gamma\in \Gamma\) such that \(h^\S_DO(\imath,\gamma)!\)
 and 
 \begin{equation}
    h^\S_{DO}(\imath,\gamma)=\mathbb{UR}_{\gamma}(\mathbb{NX}_{(\sigma,\gamma)}(\imath))
 \end{equation} 
 \item 
 $\imath^\S_0=(  \{ m_0 \}    ,\epsilon)\in \mathbb{I}^\S_D$ is the initial state, which is a decision-state.
    \end{itemize}
\end{mydef}

The IS-based control structure operates intuitively as follows. 
From each decision-state \(\mathbb{I}^\S_D\), 
there exists a unique control decision defined, upon which the structure moves to a successor observation-state \(\imath'\in \mathbb{I}^\S_O\) via transition \(h^\S_{DO}\). 
Then, from this observation-state,
all feasible observations  \(\sigma\in\Sigma_o\cap\Gamma(\imath')\) are defined, 
and each occurrence takes the structure to another   decision-state according to the transition \(h^\S_{OD}\), and so forth.
Following this, 
for each observation string 
\(\alpha=\sigma_1\sigma_2...\sigma_n\in\Sigma_o^*\), 
it induces a unique sequence in the control structure
\vspace{-3pt}
\begin{equation}\label{eq:flow}
    \imath^\S_0\xrightarrow{\gamma_0}\imath'_0\xrightarrow{\sigma_1}\imath_1\xrightarrow{\gamma_1}\cdots
   \xrightarrow{\gamma_{n-1}}\imath'_{n-1}\xrightarrow{\sigma_n}\imath_n \xrightarrow{\gamma_{n}}\imath'_{n},
\end{equation}
where $\gamma_i$ is the unique control decision defined at decision-state $\imath_i$.  
We denote 
$\mathbb{I}^\S_D(\alpha)=\imath_n$ and  
$\mathbb{I}^\S_O(\alpha)=\imath'_n$ as
the decision-state and the observation-state reached upon $\alpha$ in $\S$, respectively.
This essentially allows us to decode an IS-based supervisor from each IS-based control structure. 

\begin{mydef}[\bf Decoded Supervisors]\upshape
    Given an IS-based control structure  \(\S\),  its decoded supervisor $S:P_o(\mathcal{L}(G))\to \Gamma$ is defined by: \vspace{-3pt}
\begin{equation}
\forall \alpha=\sigma_1\sigma_2\dots \sigma_n\in P_o(\mathcal{L}(G)): 
S(\alpha)= \gamma_{n}, 
\end{equation}
where $\gamma_n$ is the unique control decision at $\mathbb{I}^\S_D(\alpha)$ in $\S$.
\end{mydef}

The following result establishes the meaningfulness of the proposed information structure. Specifically, it shows that, for any supervisor decoded from an IS-based structure, the observation-state reached explicitly captures both the supervisor's own knowledge of the plant states and 
its knowledge about the state estimate from the intruder's perspective. This ensures that the supervisor’s decision-making process properly accounts for the uncertainty in the intruder’s knowledge and that the supervisor’s state representation aligns with the information-flow from both the supervisor’s and intruder’s viewpoints.

\begin{mypro}\upshape\label{prop:IS}
    Given an IS-based control structure \(\S\), let \(S:P_o(\mathcal{L}(G))\to\Gamma\) be its decoded supervisor. 
    For any observation $\alpha\in P_o(\L(S/G))$,  
    let $\imath=\mathbb{I}^\S_O(\alpha)$ be the observation-state reached. Then 
    we have \vspace{-3pt}
\begin{align}
    X(\imath)=&\ \mathcal{E}_{o}^S(\alpha)   \\
    Q(\imath)=&\ \{\mathcal{E}_{a}( \mathbb{P}_S(s) ):   
    s\in P_o^{-1}(\alpha)\!\cap \!\L(S/G)\}. \nonumber
\end{align} 
\end{mypro}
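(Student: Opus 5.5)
The approach is to prove a stronger, element-wise statement by induction on the length of the observation $\alpha$. Any $m\in\imath$ with $\imath=\mathbb{I}^\S_O(\alpha)$ has the form $(x,q,\gamma)$, so the two claimed equalities follow from the following characterization:
\begin{equation}\label{eq:prop2-strong}
\mathbb{I}^\S_O(\alpha)=\{f(\hat{s}): s\in P_o^{-1}(\alpha)\cap\L(S/G)\}.
\end{equation}
Here $\hat s$ is the augmented string of $s$ under the decoded supervisor $S$. Given this, Proposition~\ref{prop:1} yields $f(\hat s)=(\delta(s),\mathcal{E}_a(\mathbb{P}_S(s)),S(P_o(s)))$. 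Projecting onto the first component gives $X(\imath)=\{\delta(s): s\in P_o^{-1}(\alpha)\cap\L(S/G)\}=\mathcal{E}^S_o(\alpha)$. Projecting onto the second component gives $Q(\imath)$ as claimed. As a by-product, the third component is always $S(\alpha)$, which confirms that $\imath$ is consistent.

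For the base case $\alpha=\epsilon$, we have $\imath=\mathbb{UR}_{\gamma_0}(\mathbb{NX}_{(\epsilon,\gamma_0)}(\{m_0\}))$ with $\gamma_0=S(\epsilon)$. By Eq.~\eqref{eq:9}, $\mathbb{NX}$ yields the single state $f(m_0,(\epsilon,\gamma_0))=f(\widehat{\epsilon})$. The operator $\mathbb{UR}_{\gamma_0}$ then applies $f$ along all words $(\sigma_1,\gamma_0)\cdots(\sigma_k,\gamma_0)$ with $\sigma_i\in\Sigma_{uo}\cap\gamma_0$ that are defined in $\mathfrak{A}$. Definedness requires $\sigma_i\in\Lambda(x)$ at each step, so these words correspond exactly to strings $s=\sigma_1\cdots\sigma_k\in\L(G)$ with $P_o(s)=\epsilon$ and every event in $S(\epsilon)$, i.e.\ to $s\in P_o^{-1}(\epsilon)\cap\L(S/G)$. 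Since $P_o$ of every prefix is $\epsilon$, such a word is exactly $\hat s$ with its leading $(\epsilon,S(\epsilon))$ factored out. This proves \eqref{eq:prop2-strong} for $\alpha=\epsilon$.

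For the inductive step, take $\alpha\sigma\in P_o(\L(S/G))$ with $\sigma\in\Sigma_o$. Then $\sigma\in S(\alpha)=\Gamma(\mathbb{I}^\S_O(\alpha))$, so $h^\S_{OD}$ is defined. Let $\gamma=S(\alpha\sigma)$. Then $\mathbb{I}^\S_O(\alpha\sigma)=\mathbb{UR}_{\gamma}(\mathbb{NX}_{(\sigma,\gamma)}(\mathbb{I}^\S_O(\alpha)))$. Every $t\in P_o^{-1}(\alpha\sigma)\cap\L(S/G)$ factors uniquely as $t=s\sigma w$ with $s\in P_o^{-1}(\alpha)\cap\L(S/G)$ and $w\in\Sigma_{uo}^*$. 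In $\hat t$, the symbol for $\sigma$ carries $S(\alpha\sigma)=\gamma$, and every symbol of $w$ also carries $\gamma$, since $P_o$ is constant along $w$. Hence $\hat t=\hat s\,(\sigma,\gamma)\,\hat w$ with $\hat w\in((\Sigma_{uo}\cap\gamma)\times\{\gamma\})^*$. Closed-loop membership of $s\sigma w$ amounts to three facts: $s\in\L(S/G)$; $\sigma\in S(\alpha)$, which matches the condition $\sigma\in\gamma_{\text{old}}$ in the transition rule of $f$ at $f(\hat s)$; and $w\in(\gamma\cap\Sigma_{uo})^*$ feasible in $G$. Combining these with the induction hypothesis gives both inclusions of \eqref{eq:prop2-strong} for $\alpha\sigma$.

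The main obstacle is the bookkeeping needed to make this correspondence exact. The convention in $f$ is that the event $\sigma$ must belong to the \emph{old} decision stored in $m$, while the pair carries the \emph{new} decision $\gamma'$. One must check that this convention matches the closed-loop definition, where $\sigma$ must lie in $S(P_o(s))$, and matches the augmented string, which records $S(P_o(s\sigma))$. This alignment holds because the third component of $f(\hat s)$ equals $S(P_o(s))$, again by Proposition~\ref{prop:1}. Everything else reduces to unfolding definitions.
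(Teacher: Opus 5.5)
Your proposal is correct and follows essentially the same route as the paper: induction on $|\alpha|$, unfolding $\mathbb{UR}_{S(\alpha\sigma)}\circ\mathbb{NX}_{(\sigma,S(\alpha\sigma))}$ and invoking Proposition~\ref{prop:1}. The differences are that you avoid the paper's case split on $\sigma\in\Sigma_a$ versus $\sigma\in\Sigma_{ua}$, and that you state explicitly the invariant $\mathbb{I}^\S_O(\alpha)=\{f(\hat s): s\in P_o^{-1}(\alpha)\cap\L(S/G)\}$, which the paper's inductive step relies on silently; the stated hypothesis on $X(\imath)$ and $Q(\imath)$ alone does not pair each plant state with its own intruder estimate, so your version is the more careful bookkeeping of the same argument.
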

\begin{proof}
See the Appendix. 
\end{proof}
We illustrate the notions of the IS-based control structure and the decoded supervisor with the following example.

\begin{myexm}[\bf Control Structure and Decoded Supervisor]

\begin{figure}
    \centering
        \begin{tikzpicture}[->,>={Latex}, thick, initial text={}, node distance= 1 cm, initial where=above, thick, D node/.style={rectangle, rounded corners, draw, minimum size=4mm, font=\footnotesize}, O node/.style={rectangle, draw, minimum size=4mm, font=\footnotesize}]    
   \node[initial, state, D node] (0) { $ \{m_0\},\epsilon $ } ;  
   \node[state, O node, ] [below of = 0] (1) {\(\{\!(0,\{0\}\!),\!(1,\!\{1,...,7\}\!)\!\}, \Sigma\)};   
   \node[state, D node][below of = 1] (2) {\(\{\!(0,\{0\}\!),\!(1,\!\{1,...,7\}\!)\!\},\! \Sigma,\!u_2\)};
   \node[state, D node][below of = 2] (3) {\(\{\!(0,\{0\}\!),\!(1,\!\{1,...,7\}\!)\!\}, \!\Sigma,\!u_1\)};
   \node[state, O node][below of = 3] (4) {\(\{\!(2,\{2,...,7\}\!)\!\}, \{a,\!b,\!u_2\}\)};
   \node[state, D node][below of = 4] (5) {\(\{\!(2,\{2,...,7\}\!)\!\}, \{a,\!b,\!u_2\},u_2\)};
   
   \node[state, O node][xshift=4.4cm] (6) {\(\{\!(6,\{6\}\!)\!\}, \{a,\!b,\!u_1\}\)};
   \node[state, D node][below of = 6] (7) {\(\{\!(3,\{2,\!...,\!7\}\!),\!(5,\{5,\!6\}\!)\!\},\! \{a,\!b,\!u_1\},\!u_1\)};
   \node[state, O node][below of = 7] (8) {\(\{\!(3,\{2,\!...,\!7\}\!),\!(5,\{5,\!6\}\!)\!\},\! \{a,\!b,\!u_1\}\)};
   \node[state, O node][below of = 8] (9) {\(\{\!(7,\{7\}\!)\!\}, \Sigma\)};
   \node[state, D node][below of = 9] (10) {\(\{\!(5,\{5,7\}\!)\!\}, \{a,\!b,\!u_2\},u_2\)};
   \node[state, O node][below of = 10] (11) {\(\{\!(5,\{5,7\}\!)\!\}, \{a,\!b,\!u_2\}\)};
   
    \draw[->] (1) -- (-2.2cm, -1cm)--(-2.2cm, -3cm)--(3);
   
   \path[->]
   (0) edge node [] {} (1)
   (1) edge node [] {} (2)
   (3) edge node [] {} (4)
   (4) edge node [] {} (5)
   (7) edge node [] {} (6)
   (8) edge node [] {} (7)
   (2) edge node [] {} (8)
   (10) edge node [] {} (9)
   (11) edge node [] {} (10)
   (5) edge node [] {} (11);
   \end{tikzpicture}
    \caption{IS-based Control Structure of \(S\) in Example~\ref{example1}.}
    \label{fig:controlstructure}
\end{figure}
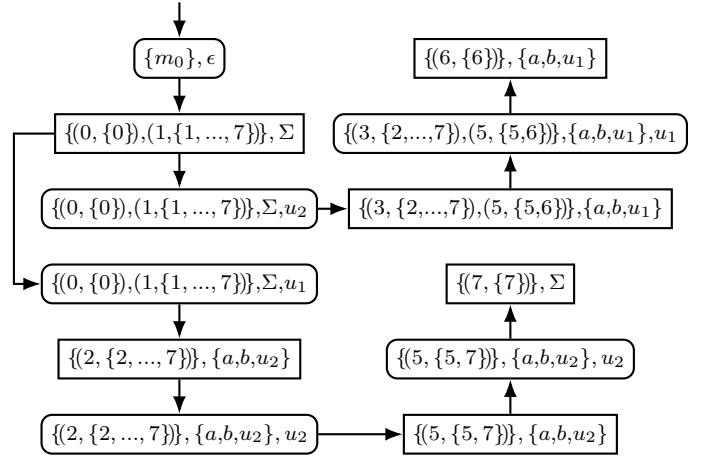

We still consider the system shown in Figure~\ref{fig:motiexample}. 
An example of the IS-based control structure is shown in Figure~\ref{fig:controlstructure}. 
In this structure, each rectangle with rounded corners represents a decision-state \(\imath \in \mathbb{I}^\S_D\), from which a unique control decision $\gamma$ is selected; 
each plain rectangle represents an observation-state \(\imath' \in \mathbb{I}^\S_O\) from which transitions lead by feasible events \(\sigma\in\Sigma_o\cap\Gamma(\imath')\) are all defined. For simplicity, we factor out the consistent control decision for each \(m\in\imath\) and \(\imath\in\mathbb{I}\) and write simply as
\(
\imath=(\{(x_1,q_1),...,(x_k,q_k)\},\gamma)
\)
in Figure~\ref{fig:controlstructure}.

We start from the decision-state $\imath_0 \!=\! (\{m_0\},\epsilon)$, where the control decision is $\gamma_0 = \Sigma$ (all events enabled), the system evolves to the observation-state $\imath'_0={\mathbb{UR}}_\Sigma({\mathbb{NX}}_{(\epsilon,\Sigma)}(\imath_0))=\mathbb{UR}_\Sigma(\{\!(0,\{0\}\!,\Sigma)\!\})=\{\!(0,\{0\},\!\Sigma),\!(1,\{1,...,7\},\!\Sigma)\!\}$. 
Then, upon observation \(u_2\), the structure transitions to decision-state $\imath_1 = (\imath'_0,u_2)$, at which the unique control decision $\gamma_1 = \{a,b,u_1\}$ (events \(u_2,u_3\) disabled) is applied, and leading deterministically to the next observation-state \(\imath'_1=(\{\!(3,\{2,...,7\},\!\{a,\!b,\!u_1\}),\!(5,\{5,6\},\!\{a,\!b,\!u_1\})\!\})\). After this, the structure transitions to \((\{\!(6,\{6\}\!)\!\}, \{a,\!b,\!u_1\})\) via observation \(u_1\) and applies the decision \(\{a,b,u_1\}\). Another branch is defined analogously.
This control structure induces a unique supervisor \(S\) defined by: \(S(u_1)=S(u_1u_2)=\{a,b,u_2\}\), \(S(u_2)=S(u_2u_1)=\{a,b,u_1\}\), and \(S(\alpha)=\Sigma\) for any other \(\alpha\in P_o(\L(G))\), which is exactly the one discussed in Example~\ref{example1}.
\end{myexm}

\section{Synthesis of IS-Based Supervisors}\label{sec:alg}
In this section, we tackle the supervisor synthesis problem based on the proposed information structure. We first present an algorithm to effectively search for an IS-based supervisor that enforces opacity. We then prove the soundness and completeness of the algorithm by showing that restricting attention to IS-based supervisors is without loss of generality.

\subsection{Supervisor Synthesis Algorithm}\label{section-alg}

Recall that, in the construction of the IS-based control structure, we showed that the second component of the observation-state reach is the set of all possible state estimates of the intruder that are consistent with the supervisor's observation. For any observation-state \(\imath \in \mathbb{I}^\S_O\), we say that $\imath$ is \textbf{safe} if it does not contain any element that is a possible state estimate of the intruder, entirely composed of secret states. Formally, this condition is given by:
        \begin{align}
            \forall q \in Q(\imath): q \not\subseteq X_S.
        \end{align}
We say an IS-based control structure is safe if all observation-states in it are safe. 
The following result shows that, to enforce opacity, it suffices to construct an IS-based control structure in which all observation-states are safe.

\begin{mythm}\upshape\label{theo:sound}
    Let \(\S\) be a safe IS-based control structure. Then its decoded supervisor \(S\) solves Problem~\ref{prob:online-opa-enf}, i.e., 
    \[
    (\forall s\in \L(S/G))[\mathcal{E}_{a}(\mathbb{P}_S(s))\not\subseteq X_S].  
    \]
\end{mythm}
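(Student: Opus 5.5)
The plan is to reduce the statement directly to Proposition~\ref{prop:IS}. Fix an arbitrary string $s\in\L(S/G)$ and let $\alpha=P_o(s)$. Since $s\in\L(S/G)$, we have $\alpha\in P_o(\L(S/G))$, so $\alpha$ induces in $\S$ the unique run of the form \eqref{eq:flow}. Let $\imath=\mathbb{I}^\S_O(\alpha)$ be the observation-state reached at the end of that run. Because $s\in P_o^{-1}(\alpha)\cap\L(S/G)$, the second equality in Proposition~\ref{prop:IS} gives
\[
\mathcal{E}_a(\mathbb{P}_S(s))\in\{\mathcal{E}_a(\mathbb{P}_S(t)): t\in P_o^{-1}(\alpha)\cap\L(S/G)\}=Q(\imath).
\]
Since $\S$ is safe, every observation-state is safe, and in particular $\imath$ is. Hence every $q\in Q(\imath)$ satisfies $q\not\subseteq X_S$, so $\mathcal{E}_a(\mathbb{P}_S(s))\not\subseteq X_S$. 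As $s$ was arbitrary, Definition~\ref{def:online-cur-opa} holds and $S$ solves Problem~\ref{prob:online-opa-enf}.

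The only point that needs care is well-definedness: the run along $\alpha$ must exist in $\S$, i.e., the observation-state $\mathbb{I}^\S_O(\alpha)$ must actually be reached. I would check this by induction on $|\alpha|$. If $\alpha'\sigma\in P_o(\L(S/G))$, there is a string $t u\sigma\in\L(S/G)$ with $P_o(t)=\alpha'$ and $u\in\Sigma_{uo}^*$. Proposition~\ref{prop:IS} then gives $\delta(tu)\in X(\mathbb{I}^\S_O(\alpha'))$. Moreover $\sigma\in S(\alpha')=\Gamma(\mathbb{I}^\S_O(\alpha'))$, so $h^\S_{OD}$ is defined on $\sigma$ at that observation-state, and the unique decision at the resulting decision-state then yields the next observation-state.

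I expect no real obstacle, since Proposition~\ref{prop:IS} carries all the content; the proof is essentially this membership argument plus the safety definition.
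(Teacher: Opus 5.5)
Taking Proposition~\ref{prop:IS} as given, as the paper does, your proof is correct and is the paper's argument run directly rather than by contraposition. For $s\in\L(S/G)$, Proposition~\ref{prop:IS} places $\mathcal{E}_a(\mathbb{P}_S(s))$ in $Q(\mathbb{I}^\S_O(P_o(s)))$, and safety of that observation-state gives $\mathcal{E}_a(\mathbb{P}_S(s))\not\subseteq X_S$. Your induction showing that the run along $P_o(s)$ exists in $\S$ (via $\delta(tu)\in X(\mathbb{I}^\S_O(\alpha'))$ and $\sigma\in S(\alpha')=\Gamma(\mathbb{I}^\S_O(\alpha'))$) is a sound addition that the paper leaves implicit in the phrase ``the observation-state reached'' of Proposition~\ref{prop:IS}.
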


\begin{proof}
We prove by contraposition. 
Suppose that \(S/G\) is not opaque.  Then we know that there exists a 
string \(s\in\L(S/G)\), such that \(\mathcal{E}_{a}(\mathbb{P}_S(s^*))\subseteq X_S\). 
According to Proposition~\ref{prop:IS}, 
for observation \(\alpha=P_o(s)\), we have \(q=\mathcal{E}_{a}(\mathbb{P}_S(s))\in Q(\mathbb{I}^\S_O(\alpha))\) such that \(q\subseteq X_S\). Therefore, observation-state \(\mathbb{I}^\S_O(\alpha)\) is not safe, which means that  \(\S\) is not safe.
\end{proof}

In order to find a safe IS-based control, we use a safety-game-based approach, 
which is formally provided in  Algorithm~\ref{alg:main}. 
The main idea  is to first enumerate all possible control decisions and then extract a feasible \(\S\) from them. Specifically, Algorithm~\ref{alg:main} consists of the following three steps.

\begin{algorithm}[t]
\caption{Synthesis of Control Structure \(\S\)}\label{alg:main} 
\KwIn{\(G\),\  \(\Sigma_c\),\  \(\Sigma_a\),\  \(\Sigma_o\),\  \(X_S\)}
\KwOut{\(\S\)}

\(\imath^\mathfrak{B}_0\gets (\{m_o\},\epsilon), \ \ \imath\gets\imath^\mathfrak{B}_0 \)
\\
\(\mathbb{I}^\mathfrak{B}_D\gets\{\imath^\mathfrak{B}_0\}\),
\ \
\(\mathbb{I}^\mathfrak{B}_O\gets\emptyset\),
\ \
$h_{DO}^\mathfrak{B}\gets \emptyset$, 
\ \ 
$h_{OD}^\mathfrak{B}\gets \emptyset$
\\
\ForEach{\(\gamma\in\Gamma\)}
{
    \If{\(\imath'=h^\mathfrak{B}_{DO}(\imath,\gamma)\) is safe}
    {
        \(h^\mathfrak{B}_{DO}\gets h^\mathfrak{B}_{DO}\cup \{(\imath,\gamma,\imath')\}\) \\
        \If{\(\imath'\not\in\mathbb{I}^\mathfrak{B}_O\)}
        {
           \(\mathbb{I}^\mathfrak{B}_O\gets\mathbb{I}^\mathfrak{B}_O\cup\{\imath'\}\) \\
           \ForEach{\(\sigma\in\Sigma_o\cap\gamma\)}
           {
                \(h^\mathfrak{B}_{OD}\gets h^\mathfrak{B}_{OD}\cup \{(\imath',\sigma,h^\mathfrak{B}_{OD}(\imath',\sigma))\}\) \\
                \If{\(h^\mathfrak{B}_{OD}(\imath',\sigma)\not\in\mathbb{I}^\mathfrak{B}_D\)}
                {
                    \(\imath\gets h^\mathfrak{B}_{OD}(\imath',\sigma)\), \ \
                    \(\mathbb{I}^\mathfrak{B}_D\gets \mathbb{I}^\mathfrak{B}_D\cup \{\imath\}\) \\
                    back to line~3
                }
           }
        }
    }
}
\While{\(\mathbb{I}_{bad}^\mathfrak{B}\neq\emptyset\)}
{
Remove all incomplete states \(\mathbb{I}_{bad}^\mathfrak{B}\) from \(\mathfrak{B}\)\\
Remove all transitions involving \(\mathbb{I}_{bad}^\mathfrak{B}\) from \(\mathfrak{B}\) 
}
\eIf{\(\imath^\mathfrak{B}_0\not\in\mathbb{I}^\mathfrak{B}_D\)}
{
\textbf{return} no solution exists}
{
\(\mathbb{I}^\S_D \gets \{ \imath^\mathfrak{B}_0 \}\), \(\mathbb{I}^\S_O \gets \emptyset\), \(h^\S_{DO} \gets\emptyset\), \(h^\S_{OD} \gets\emptyset\), \(\imath\gets\imath^\mathfrak{B}_0\) \\
Pick a transition \((\imath,\gamma,\imath')\in h^\mathfrak{B}_{DO}\) \\
\(h^\S_{DO}\gets h^\S_{DO}\cup \{(\imath,\gamma,\imath')\}\) \\
\If{\(\imath'\not\in \mathbb{I}^\S_O\)}
{
    \(\mathbb{I}^\S_O\gets\mathbb{I}^\S_O\cup\{\imath'\}\) \\
    \ForEach{\((\imath',\sigma,\imath'')\in h^\mathfrak{B}_{OD}\)}
    {
        \(h^\S_{OD}\gets h^\S_{OD} \cup \{(\imath',\sigma,\imath'')\}\) \\
        \If{\(\imath''\not\in\mathbb{I}^\S_D\)}
        {
            \(\imath\gets \imath'', \ \ \mathbb{I}^\S_D\gets \mathbb{I}^\S_D\cup\{\imath\}\) \\
            back to line~20
        }
    }
}
\textbf{return} IS-based Control Structure \(\S\)} 
\end{algorithm}

\textbf{Step 1: Initial Expansion (line 1--12).}
This step constructs a structure \vspace{-3pt}
\[
\mathfrak{B} = \bigl( \mathbb{I}^\mathfrak{B}_D,\ \mathbb{I}^\mathfrak{B}_O,\ h^\mathfrak{B}_{DO},\ h^\mathfrak{B}_{OD},\ \imath^\mathfrak{B}_0 \bigr)\vspace{-3pt}
\]
that enumerates all reachable information-states satisfying safety. 
Compared to the IS-based control structure in Definition~\ref{def:IS-Cstructure}, 
the decision-state in \(\mathfrak{B}\) may have \emph{multiple} control decisions.
Specifically, lines 1-2 initialize \(\mathfrak{B}\) with a single decision-state:
\(( \{m_0\},\epsilon)\).
A depth-first search is then performed to recursively expand \(\mathfrak{B}\) from the initial state, exploring only observation-states that remain safe throughout the process.
The expansion stops whenever it encounters either: (i) an observation-state violates safety, or
(ii) an information-state has been previously visited.
By this construction, every state in \(\mathbb{I}^\mathfrak{B}_O\) preserves the required safety properties.

\begin{myexm}[Expand the Solution Space]
    \begin{figure*}
    \centering
        \begin{tikzpicture}[->,>={Latex}, thick, initial text={}, node distance= 1 cm, initial where=above, thick, D node/.style={rectangle, rounded corners, draw, minimum size=4mm, font=\footnotesize}, O node/.style={rectangle, draw, minimum size=4mm, font=\footnotesize}]    
   \node[initial, state, D node, color=blue] (0) { $ \{m_0\},\epsilon $ } ;  
   \node[state, O node, color=blue] [xshift=4cm] (1) {\(\{\!(0,\{0\}\!),\!(1,\!\{1,...,7\}\!)\!\}, \Sigma\)};   
   \node[state, D node, color=blue][below of = 1] (2) {\(\{\!(0,\{0\}\!),\!(1,\!\{1,...,7\}\!)\!\},\! \Sigma,\!u_2\)};
   \node[state, D node, color=blue][below of = 0] (3) {\(\{\!(0,\{0\}\!),\!(1,\!\{1,...,7\}\!)\!\}, \!\Sigma,\!u_1\)};
   \node[state, O node, color=blue][below of = 3] (4) {\(\{\!(2,\{2,...,7\}\!)\!\}, \{a,\!b,\!u_2\}\)};
   \node[state, D node, color=blue][below of = 4] (5) {\(\{\!(2,\{2,...,7\}\!)\!\}, \{a,\!b,\!u_2\},u_2\)};
   \node[state, D node, color=blue][xshift=13.2cm, yshift=-3cm] (7) {\(\{\!(3,\{2,\!...,\!7\}\!),\!(5,\{5,\!6\}\!)\!\},\! \{a,\!b,\!u_1\},\!u_1\)};
   \node[state, O node, color=blue][below of = 7] (6) {\(\{\!(6,\{6\}\!)\!\}, \Sigma\)};
   \node[state, O node, color=blue][xshift=8.4cm, yshift=-2cm] (8) {\(\{\!(3,\{2,\!...,\!7\}\!),\!(5,\{5,\!6\}\!)\!\},\! \{a,\!b,\!u_1\}\)};
   \node[state, O node, fill=red][xshift=4cm, yshift=-5cm] (9) {\(\{\!(7,\{7\}\!)\!\}, \Sigma\)};
   \node[state, O node, color=red][below of = 5] (11) {\(\{\!(5,\{5,7\}\!)\!\}, \{a,\!b,\!u_2\}\)};
   \node[state, D node, color=red][below of = 11] (10) {\(\{\!(5,\{5,7\}\!)\!\}, \{a,\!b,\!u_2\},u_2\)};
   \node[state, O node][above of=8] (12) {\(\{\!(3,\{2,\!...,\!7\}\!),\!(5,\{5,\!6\}\!)\!\},\! \Sigma\)};
   \node[state, O node, color=red][above of = 12] (13) {\(\{\!(3,\{2,\!...,\!7\}\!),\!(5,\{5,\!6\}\!)\!\},\! \{a,\!b,\!u_2\}\)};
   \node[state, D node][above of=7] (14) {\(\{\!(3,\{2,\!...,\!7\}\!),\!(5,\{5,\!6,\!7\}\!)\!\},\! \Sigma,\!u_1\)};
   \node[state, D node][above of=14] (15) {\(\{\!(3,\{2,\!...,\!7\}\!),\!(5,\{5,\!6,\!7\}\!)\!\},\! \Sigma,\!u_2\)};
   \node[state, D node, color=red][above of=15] (16) {\(\{\!(3,\{2,\!...,\!7\}\!),\!(5,\{5,\!7\}\!)\!\},\! \{a,\!b,\!u_2\},\!u_2\)};
   \node[state, O node, ][below of = 2] (17) {\(\{\!(5,\{5,6,7\}\!)\!\}, \{a,\!b,\!u_1\}\)};
   \node[state, O node, color=blue][below of = 17] (18) {\(\{\!(5,\{5,6,7\}\!)\!\}, \Sigma\)};
   \node[state, D node, color=blue][below of = 18] (19) {\(\{\!(5,\{5,6,7\}\!)\!\}, \Sigma,u_2\)};
   \node[state, D node, ][below of = 8] (20) {\(\{\!(5,\{5,6,7\}\!)\!\}, \!\{a,\!b,\!u_1\},\! u_1\)};
   \node[state, D node, color=blue][below of = 20] (21) {\(\{\!(5,\{5,6,7\}\!)\!\}, \Sigma,u_1\)};
   \node[state, O node, color=blue][below of = 21] (22) {\(\{\!(7,\{6,7\}\!)\!\},\! \Sigma\)};
   \node[state, O node, color=blue][below of = 6] (23) {\(\{\!(6,\{6,7\}\!)\!\},\! \Sigma\)};

   \draw[->, dashed] (16) -- (16cm, 0cm)--(16cm, -5.9cm)--(4cm, -5.8cm)--(9);
   \draw[->] (15) -- (15.9cm, -1cm)--(15.9cm, -5.7cm)--(8.4cm, -5.7cm)--(22);
   \draw[->] (14) -- (15.8cm, -2cm)--(15.8cm, -5cm)--(23);
   \draw[->, dashed] (10)--(9);
   
   \path[->]
   (0) edge node [] {} (1)
   (1) edge node [] {} (2)
   (1) edge node [] {} (3)
   (3) edge node [] {} (4)
   (4) edge node [] {} (5)
   (7) edge node [] {} (6)
   (8) edge node [] {} (7)
   (2) edge node [] {} (8)
   (2) edge node [] {} (12)
   (2) edge node [] {} (13)
   (11) edge node [] {} (10)
   (12) edge node [] {} (14)
   (12) edge node [] {} (15)
   (13) edge node [] {} (16)
   (5) edge node [] {} (11)
   (5) edge node [] {} (17)
   (5) edge node [] {} (18)
   (18) edge node [] {} (19)
   (18) edge node [] {} (21)
   (21) edge node [] {} (23)
   (17) edge node [] {} (20)
   (19) edge node [] {} (22)
   (20) edge node [] {} (6);
   \end{tikzpicture}
    \caption{Partial representation of structure \(\mathfrak{B}\). Incomplete states are highlighted in red. An extracted control structure is highlighted in blue.}
    \label{fig:alg}
\end{figure*}
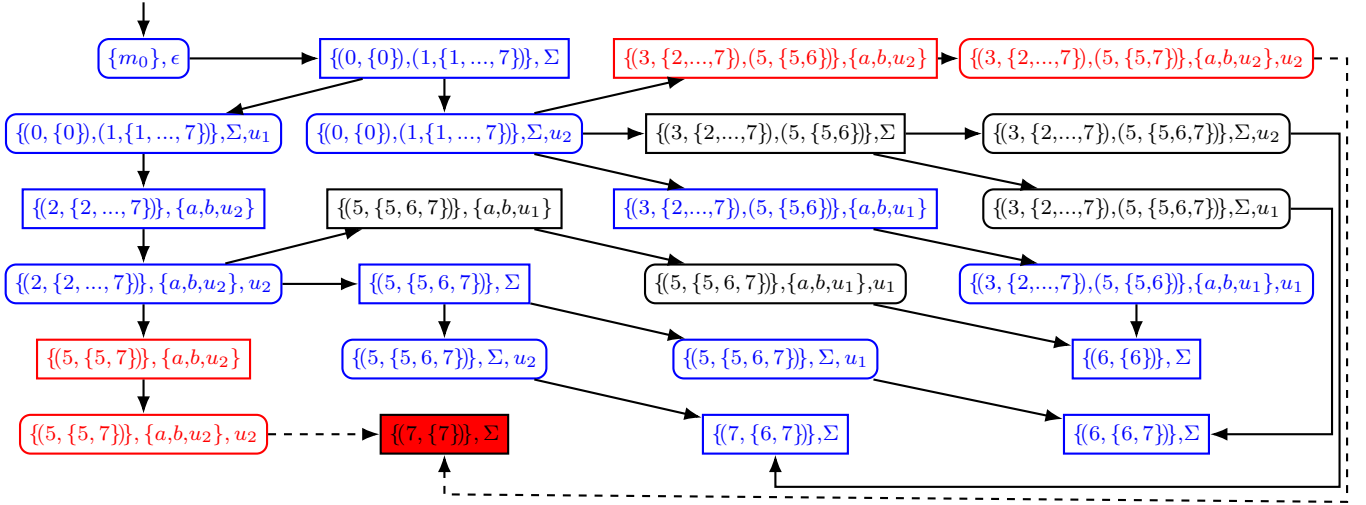
Let us continue with the running example. A portion of the overall structure \(\mathfrak{B}\) is illustrated in Figure~\ref{fig:alg}. 
Initially, we set the initial state as \(\imath^\mathfrak{B}_0 = (\{m_0\},\epsilon)\). 
Starting from here, the search iteratively expands \(\mathfrak{B}\) along all possible decisions; here we take the decision \(S(\epsilon)=\Sigma\) as an example. Then from the observation-state \(\{\!(0,\{0\},\!\Sigma),\!(1,\!\{1,...,7\},\!\Sigma)\!\}\), the search expands \(\mathfrak{B}\) along all possible observations \(u_1\) and \(u_2\) and so forth. This search terminates until it encounters either a state that has been previously visited, which are states \(\{(7,\{6,7\},\!\Sigma)\!\}\), \(\{(6,\{6,7\},\!\Sigma)\!\}\) and \(\{(6,\{6\},\!\Sigma)\!\}\) in the shown part, or the unsafe observation-state, which is \(\{(7,\{7\},\!\Sigma)\!\}\) in the shown part. Thus, transitions drawn with the dashed line do not belong in \(\mathfrak{B}\).
\end{myexm}

\textbf{Step 2: Completeness Check (line 13--15).}
This step prunes the previously obtained structure \(\mathfrak{B}\) by removing the following two types of  \emph{incomplete} states.
\begin{itemize}
    \item A decision-state \(\imath \in \mathbb{I}^\mathfrak{B}_D\) is incomplete if it has no control decision defined. 
    Thus, once the system reaches \(\imath\), no control decision can be taken to ensure safety. The set of incomplete decision-states is defined by  
    \[
        \mathbb{I}_{bad,D}^\mathfrak{B}=
        \{
        \imath \in \mathbb{I}^\mathfrak{B}_D
        :
        \forall \gamma\in \Gamma\text{ s.t. }f^\mathfrak{B}_{DO}(\imath,\gamma)\neg! 
        \}. 
    \]
    \item 
    An observation-state \(\imath \in \mathbb{I}^\mathfrak{B}_O\) is incomplete if some feasible observations are not defined. Since observations occur uncontrollably, reaching such a state would risk violating safety upon the occurrence of any missing observations. 
    The set of incomplete observation-states is defined  \vspace{-3pt}
    \begin{equation}   
    \mathbb{I}_{bad,O}^\mathfrak{B}=\left\{
        \imath \in \mathbb{I}^\mathfrak{B}_O:
        \!\!\!\!\!\!\! \begin{array}{cc}
             &   \exists \sigma \in \Sigma_o \cap \Gamma(\imath)\cap \Lambda(X(\imath))\\ 
             & \text{ s.t. }
              f^\mathfrak{B}_{OD}(\imath,\sigma)\neg!
        \end{array}
    \right\}.\nonumber
    \end{equation}
\end{itemize}

In this step, all incomplete states \(\mathbb{I}_{bad}^\mathfrak{B}=\mathbb{I}_{bad,D}^\mathfrak{B}\cup\mathbb{I}_{bad,O}^\mathfrak{B}\) together with their outgoing and incoming transitions are eliminated. 
However, the pruning of an incomplete state may cause a predecessor state to become incomplete. 
We thus perform this elimination in a while-loop until no further incomplete states remain.

\begin{myexm}[Prune Incomplete States]
  We continue with the running example. In Figure~\ref{fig:alg}, all incomplete states are highlighted in red. For example, state \((\{\!(5,\{5,7\},\!\{a,\!b,\!u_2\})\},u_2)\)  is incomplete because it has no feasible control decision defined. 
  By removing this red-highlighted state from \(\mathfrak{B}\), the state \(\{\!(5,\{5,7\},\!\{a,\!b,\!u_2\})\!\}\) becomes incomplete because of the absence of the observable events \(u_2\), and will therefore be removed in a subsequent iteration. After this iterative pruning process, the structure's completeness is finally maintained.
\end{myexm}

\textbf{Step 3: Supervisor Extraction (line 16--28).}
The algorithm finally constructs a control structure \(\S\) based on the remaining structure \(\mathfrak{B}\).
Specifically, we proceed as follows:
\begin{itemize}
    \item The initial decision-state of \(\S\) is set to be \(\imath=\imath_0^\mathfrak{B}\);
    \item From state \(\imath\), we select a control decision \(\gamma\) and a successor observation-state \(\imath'\) from the transition relation \(h^\mathfrak{B}_{DO}\). Since \(\imath\) is complete, such a transition always exists;
    \item From the reached observation-state \(\imath'\), we consider all transitions \((\imath',\sigma,\imath'')\) defined in \(\mathfrak{B}\). Since \(\imath'\) is complete, the transition \((\imath',\sigma,\imath'')\) is guaranteed to exist for each \(\sigma\in\Sigma_o\cap\gamma\cap\Lambda(X(\imath'))\);
    \item We continue the above selection process using depth-first search until no further new states are encountered.
\end{itemize}
The above step thus constructs an IS-based control structure from which an IS-based supervisor can be decoded. 

\begin{remark}
Note that our main purpose is to synthesize a supervisor that guarantees opacity. Thus, we are primarily interested in the solvability of Problem~\ref{prob:online-opa-enf} and do not consider the permissiveness in detail here. 
If one further seeks to synthesize a  maximally permissive supervisor, they can instead pick a locally maximal control decision in line~20; 
see, e.g.,  similar treatments in \cite{yin2015uniform}.     
\end{remark}

\begin{myexm}[Extract the Supervisor]
After expanding the solution space and pruning the incomplete states, we obtain the structure drawn in black and blue lines.
From the initial decision-state \((\{m_0\},\epsilon)\), we choose the available control decision \(\Sigma\), leading to an observation-state, from which two possible observations \(u_1\) and \(u_2\) are considered. 
Consider the decision-states \(\imath_1=(\{(2,\{2,...,7\},\{a,b,u_2\})\},u_2)\) and \(\imath_2=(\{\!(0,\{0\},\!\Sigma),\!(1,\!\{1,...,7\},\!\Sigma)\!\},\!u_2)\) from each two control decisions \(\Sigma\) and \(\{a,b,u_1\}\) are available, each leading to a different observation-state. qqIf we choose \(\Sigma\) at \(\imath_1\) and choose \(\{a,b,u_1\}\) at \(\imath_2\), the resulting control structure consists of the blue part in Figure~\ref{fig:alg}, which induces the supervisor whose controlled behavior is as discussed in Example~\ref{exm:opa-enforcing}.
\end{myexm}

\subsection{Correctness of the Algorithm}
We prove the correctness of Algorithm~\ref{alg:main} in this subsection. The soundness of the algorithm is relatively straightforward: it eliminates all unsafe states and therefore guarantees opacity. However, the completeness of the algorithm is highly nontrivial, since the solution space of the problem is unbounded (i.e., language-based supervisors), whereas our algorithm restricts attention \emph{a priori} to a finite solution space over IS-based control structures.

\begin{mylem}[\bf Soundness]\label{lem:sound}\upshape
    Given IS-based Control Structure \(\S\) returned by Algorithm~\ref{alg:main}, its decoded supervisor \(S\) constitutes a solution to Problem~\ref{prob:online-opa-enf}.
\end{mylem}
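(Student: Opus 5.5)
The plan is to reduce the claim to Theorem~\ref{theo:sound}: it suffices to show that the structure \(\S\) returned by Algorithm~\ref{alg:main} is a well-defined IS-based control structure in the sense of Definition~\ref{def:IS-Cstructure}, and that every one of its observation-states is safe. Once both facts hold, Theorem~\ref{theo:sound} (which in turn rests on Proposition~\ref{prop:IS}) immediately yields \(\mathcal{E}_{a}(\mathbb{P}_S(s))\not\subseteq X_S\) for every \(s\in\L(S/G)\).

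\textbf{Safety.} Every observation-state of \(\S\) is, by lines 16--28, an observation-state of the pruned structure \(\mathfrak{B}\). By Step~1, an observation-state \(\imath'=h^\mathfrak{B}_{DO}(\imath,\gamma)\) is added to \(\mathbb{I}^\mathfrak{B}_O\) only if the safety test in line~4 succeeds. Step~2 only deletes states and transitions. Hence every observation-state of \(\S\) satisfies \(q\not\subseteq X_S\) for all \(q\in Q(\imath')\).

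\textbf{Well-formedness.} Since the algorithm did not return ``no solution,'' we have \(\imath^\mathfrak{B}_0\in\mathbb{I}^\mathfrak{B}_D\). After the while-loop, no state of \(\mathfrak{B}\) is incomplete. I would argue by induction along the extraction DFS that every state added to \(\S\) is complete in the pruned \(\mathfrak{B}\).
\begin{itemize}
\item For a decision-state \(\imath\), completeness means at least one transition in \(h^\mathfrak{B}_{DO}\) survives. Line~20 picks exactly one such transition, which gives the uniqueness of the decision required by Definition~\ref{def:IS-Cstructure}.
\item For an observation-state \(\imath'\), completeness means every \(\sigma\in\Sigma_o\cap\Gamma(\imath')\cap\Lambda(X(\imath'))\) has a surviving transition \((\imath',\sigma,\imath'')\). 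All of these are copied, and their targets lie in the pruned \(\mathfrak{B}\), so they are again complete.
\end{itemize}
The transition formulas \(h_{OD}(\imath,\sigma)=(\imath,\sigma)\) and \(h_{DO}=\mathbb{UR}_\gamma\circ\mathbb{NX}_{(\sigma,\gamma)}\) are inherited from Step~1, so \(\S\) agrees with Definition~\ref{def:IS-Cstructure}.

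\textbf{Main obstacle.} The delicate point is that \(\S\) only defines observation transitions for \(\sigma\in\Lambda(X(\imath'))\), while Definition~\ref{def:IS-Cstructure} nominally asks for every \(\sigma\in\Sigma_o\cap\Gamma(\imath')\). The decoded supervisor therefore must be well-defined, with all observations reached along the sequence \eqref{eq:flow}, for every \(\alpha\in P_o(\L(S/G))\). I would handle this by induction on \(|\alpha|\), using the first identity of Proposition~\ref{prop:IS}, \(X(\mathbb{I}^\S_O(\alpha))=\mathcal{E}_o^S(\alpha)\).
\begin{itemize}
\item Suppose \(\alpha\sigma\in P_o(\L(S/G))\). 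Then there is \(s\in\L(S/G)\) with \(P_o(s)=\alpha\) and a string \(s w\sigma\in\L(S/G)\) with \(w\in(\Sigma_{uo}\cap S(\alpha))^*\).
\item The state \(\delta(sw)\) lies in \(\mathcal{E}^S_o(\alpha)=X(\imath')\), where \(\imath'=\mathbb{I}^\S_O(\alpha)\).
\item Consequently \(\sigma\in\Lambda(X(\imath'))\cap\Gamma(\imath')\), so the required transition exists.
\end{itemize}
Observations outside \(P_o(\L(S/G))\) are irrelevant, since they can be assigned an arbitrary decision without affecting \(\L(S/G)\).

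With the structure well-formed on all relevant observations and all its observation-states safe, Theorem~\ref{theo:sound} concludes the proof.
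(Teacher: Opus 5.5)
Your proposal is correct and follows the paper's route. Every observation-state of $\mathfrak{B}$ passes the safety test of line~4, and pruning and extraction only delete or copy states, so Theorem~\ref{theo:sound} applies to the extracted structure. Your extra check that the extracted structure is well formed is sound and fills in what the paper leaves implicit: you get a unique decision at each decision-state, and, via completeness together with $X(\mathbb{I}^{\mathfrak{S}}_O(\alpha))=\mathcal{E}_o^S(\alpha)$, a transition for every feasible observation.
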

\begin{proof}
    The proof follows directly from Theorem~\ref{theo:sound} and the construction of \(\mathfrak{B}\), which ensures that all observation-states  are safe. Since \(\S\) is extracted as a subgraph of \(\mathfrak{B}\) by construction, all observation-states in \(\S\) are also safe. Therefore, the decoded supervisor of \(\S\) provides a solution to Problem~\ref{prob:online-opa-enf}.
\end{proof}

Next, we establish the completeness of the algorithm, which states that synthesizing a supervisor within the IS-based solution space is without loss of generality.

\begin{mylem}[\bf Completeness]\label{lem:complete}\upshape
    Algorithm~\ref{alg:main} will not return ``no solution exists" when a solution to Problem~\ref{prob:online-opa-enf} exists.
\end{mylem}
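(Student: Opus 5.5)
Suppose a (language-based) supervisor $S:P_o(\L(G))\to\Gamma$ solves Problem~\ref{prob:online-opa-enf}. I will show that the initial decision-state $\imath^\mathfrak{B}_0$ survives the pruning in Step~2, so the algorithm does not return ``no solution exists''. The idea is to take the run of $S$ and map it into $\mathfrak{B}$ via information states. For each observation $\alpha\in P_o(\L(S/G))$, define the observation-state induced by $S$ as
\[
\imath_S(\alpha)=\{f(\hat s): s\in P_o^{-1}(\alpha)\cap\L(S/G)\}.
\]
The corresponding decision-state is $(\imath_S(\alpha'),\sigma)$ for $\alpha=\alpha'\sigma$, with $(\{m_0\},\epsilon)$ for $\alpha=\epsilon$. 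Let $\mathbb{D}_S$ and $\mathbb{O}_S$ collect all these states over $\alpha\in P_o(\L(S/G))$.

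\textbf{Step 1: the induced states evolve by the operators used in $\mathfrak{B}$.} I will show by induction on $|\alpha|$ that
\[
\imath_S(\alpha\sigma)=\mathbb{UR}_{S(\alpha\sigma)}\big(\mathbb{NX}_{(\sigma,S(\alpha\sigma))}(\imath_S(\alpha))\big),
\]
and similarly at the root. This is essentially the argument behind Proposition~\ref{prop:IS}, applied to an arbitrary supervisor rather than a decoded one. Its key ingredients are these:
\begin{itemize}
\item every $s\sigma w$ with $\sigma\in\Sigma_o$ and $w\in\Sigma_{uo}^*$ has augmented string $\hat s(\sigma,\gamma)(w_1,\gamma)\cdots$ with the fixed $\gamma=S(P_o(s\sigma))$;
\item $f$ is deterministic;
\item Proposition~\ref{prop:1} identifies the second component with $\mathcal{E}_a(\mathbb{P}_S(\cdot))$.
\end{itemize}
Hence every transition among $\mathbb{D}_S\cup\mathbb{O}_S$ chosen by $S$ (decision $S(\alpha)$ at the decision-state for $\alpha$) is a legitimate $h_{DO}$ transition. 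Every $\sigma\in\Sigma_o\cap\Gamma(\imath)$ gives a legitimate $h_{OD}$ transition.

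\textbf{Step 2: safety and inclusion in $\mathfrak{B}$.} By Proposition~\ref{prop:1}, $Q(\imath_S(\alpha))=\{\mathcal{E}_a(\mathbb{P}_S(s)) : s\in P_o^{-1}(\alpha)\cap\L(S/G)\}$. Opacity of $S/G$ then gives $q\not\subseteq X_S$ for every $q$ in it, so every state of $\mathbb{O}_S$ is safe. Since Step~1 of Algorithm~\ref{alg:main} exhaustively explores every $\gamma\in\Gamma$ at each reachable decision-state and keeps exactly the safe successors, induction on $|\alpha|$ shows $\mathbb{D}_S\subseteq\mathbb{I}^\mathfrak{B}_D$, $\mathbb{O}_S\subseteq\mathbb{I}^\mathfrak{B}_O$, and that the $S$-transitions are in $\mathfrak{B}$. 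The state space is finite, so revisiting does not matter: transitions into already-visited states are still recorded.

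\textbf{Step 3: no induced state is ever pruned.} I will prove by induction on the pruning iterations that no state of $\mathbb{D}_S\cup\mathbb{O}_S$ belongs to $\mathbb{I}^\mathfrak{B}_{bad}$. Two cases are needed.
\begin{itemize}
\item A decision-state in $\mathbb{D}_S$ keeps its transition labeled $S(\alpha)$ into $\mathbb{O}_S$.
\item For an observation-state $\imath_S(\alpha)$ and any $\sigma\in\Sigma_o\cap\Gamma(\imath)\cap\Lambda(X(\imath))$, there is $s$ with $P_o(s)=\alpha$ and $s\sigma\in\L(S/G)$. This holds because $\sigma\in S(\alpha)$ and $\delta(s,\sigma)$ is defined for some $s$ whose state lies in $X(\imath)=\mathcal{E}^S_o(\alpha)$. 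So $\alpha\sigma\in P_o(\L(S/G))$, and its decision-state lies in $\mathbb{D}_S$.
\end{itemize}
Thus the closed set $\mathbb{D}_S\cup\mathbb{O}_S$ is a ``winning region'' disjoint from the pruned set at every iteration. In particular $\imath^\mathfrak{B}_0=(\{m_0\},\epsilon)\in\mathbb{D}_S$ survives.

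\textbf{Main obstacle.} The subtle point is that $S$ need not be IS-based: the same information state may be reached by two observations where $S$ picks different decisions. This is harmless here, because I only need a nonempty surviving subgraph rather than a single-valued control structure; $\mathfrak{B}$ allows multiple decisions per decision-state. A second delicacy is handling observations $\sigma\in\Gamma(\imath)$ that are not physically feasible from $X(\imath)$, which is why the incompleteness definition intersects with $\Lambda(X(\imath))$. I must also check that the algorithm's expansion creates the corresponding $h_{OD}$ successors for all feasible $\sigma$. Carefully matching the $\mathbb{UR}_\gamma$ closure with the intruder's $\text{UR}^{+}$ case in Eq.~\eqref{eq:10} is the technical core of Step~1, and I would lean on the Appendix proof of Proposition~\ref{prop:IS} for it.
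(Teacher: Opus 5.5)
Your proposal is correct and follows essentially the same route as the paper. Like the paper's $B^S$, you build the structure induced by the given supervisor $S$, use Proposition~\ref{prop:1} to show its observation-states are safe and its transitions are exactly $\mathbb{UR}_{\gamma}(\mathbb{NX}_{(\sigma,\gamma)}(\cdot))$ (so it embeds in $\mathfrak{B}$), and show it is closed under both completeness conditions, so the initial decision-state survives pruning; defining the induced states as $\{f(\hat s)\}$ makes the transition identity follow directly from determinism of $f$ (the $\text{UR}^{+}$ matching you flag is already absorbed into Proposition~\ref{prop:1}), a mild streamlining of the paper's case split on $\sigma\in\Sigma_a$ versus $\sigma\in\Sigma_{ua}$.
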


\begin{proof} 
We show that \(\imath^{\mathfrak{B}}_0\in \mathbb{I}^\mathfrak{B}_D\) holds in line~16 when an  opacity-enforcing supervisor (possibly non-IS-based) exists.

Suppose that there exists a language-based supervisor \(S:P_o(\mathcal{L}(G))\to\Gamma\) that solves Problem~\ref{prob:online-opa-enf}. We first construct a decision-observation structure \(B^S\) as follows.  We start by initializing a finite decision-state space 
\begin{equation}
    \mathbb{I}^S_D=\left\{\!\!\!\!\!\!\!\!
    \begin{array}{cc}
         &  \big(\{(\delta(s),\mathcal{E}_{a}(\mathbb{P}_S(s)),S(\alpha))\!:\!s\in P^{-1}_o(\alpha)\cap\\
         &\mathcal{L}(S/G)\},
         \sigma\big)\in\mathbb{I}\times\Sigma_o^\epsilon:
         \alpha\sigma\in P_o(\mathcal{L}(S/G))
    \end{array}\!\!\!\!
    \right\},\nonumber
\end{equation}
and initializing a finite observation-state space  
\begin{equation}
    \mathbb{I}^S_O=\left\{\!\!\!\!\!\!\!\!
    \begin{array}{cc}
         & \{(\delta(s),\mathcal{E}_{a}(\mathbb{P}_S(s)),S(\alpha))\!:\!s\in P^{-1}_o(\alpha)\cap\\
         &\mathcal{L}(S/G)\}\in\mathbb{I}:
         \alpha\in P_o(\mathcal{L}(S/G))
    \end{array}\!\!\!\!
    \right\}.\nonumber
\end{equation}
Then for each \(\alpha\sigma\in P_o(\mathcal{L}(S/G)\) and for decision-state  
\begin{equation}
    \imath=\left(\!\!\!\!\!\!\!\!
    \begin{array}{cc}
         &\{(\delta(s),\mathcal{E}_{a}(\mathbb{P}_S(s)),S(\alpha))\!: \\
         &s\in P^{-1}_o(\alpha)\cap\mathcal{L}(S/G)\},\sigma
    \end{array}\!\!\!\!
    \right)\in \mathbb{I}^S_D,\nonumber
\end{equation}
we define a transition \((\imath,S(\alpha\sigma),\imath')\) from decision-state \(\imath\) to observation-state \(\imath'\), where  
\begin{equation}
    \imath'=\left\{\!\!\!\!\!\!\!\!
    \begin{array}{cc}
         &(\delta(s),\mathcal{E}_{a}(\mathbb{P}_S(s)),S(\alpha\sigma))\!:\\
         &s\in P^{-1}_o(\alpha\sigma)\cap\mathcal{L}(S/G)
    \end{array}\!\!\!\!
    \right\}\in \mathbb{I}^S_O,\nonumber
\end{equation}
and we define transitions \((\imath',\sigma',\imath'')\) from observation-state \(\imath'\) to decision-state \(\imath''\) for each \(\sigma'\in \Sigma_o\cap S(\alpha\sigma)\cap\Lambda(X(\imath'))\), 
\[
\imath''=(\imath',\sigma')\in \mathbb{I}^S_D.
\]
Finally, we set \(\mathbb{I}^S_D=\mathbb{I}^S_D\cup\{(m_0,\epsilon)\}\) and we define transition \((\imath,S(\epsilon),\imath')\) for \(\imath= (\{m_0\},\epsilon)\)
and \(\imath'=\{\!(\delta(w),\mathcal{E}_{a}(\mathbb{P}_S(w)\!),\!S(\epsilon))\!:\!w\in(\Sigma_{uo}\!\cap\! S(\epsilon)\!)^*\}\) and transitions \((\imath',\sigma,\imath'')\) for each \(\sigma\in\Sigma_o\cap S(\epsilon)\cap\Lambda(X(\imath'))\) and \(\imath''=(\imath',\!\sigma\!)\). We set \((\{m_0\},\epsilon)\) as the initial state.

By the above construction, for each observation  \(\alpha=\sigma_1\sigma_2...\sigma_n\!\in\! P_o(\mathcal{L}(S/G))\), it  induces an unique path in \(B^S\) 
\begin{equation}
\imath_0\xrightarrow{S(\epsilon)}\imath'_0\xrightarrow{\sigma_1}\imath_1\xrightarrow{S(\sigma_1)}\cdots \xrightarrow{\sigma_n}\imath_n \xrightarrow{S(\sigma_1...\sigma_{n})}\imath'_n,\nonumber
\end{equation}
such that (i) \(X(\imath'_n)=\mathcal{E}^S_{o}(\alpha)\), and (ii) \(Q(\imath'_n)=\{\mathcal{E}_{a}(\mathbb{P}_S(s)):s\in P_o^{-1}(\alpha)\cap \L(S/G)\}\).
Since \(S\) solves Problem~\ref{prob:online-opa-enf}, we  have \(\mathcal{E}_{a}(\mathbb{P}_S(s))\not\subseteq X_S, \forall s\in \mathcal{L}(S/G)\) and thus each observation-state in \(B^S\) is safe. Finally, for each defined transition \(((\imath,\sigma),S(\alpha\sigma),\imath')\), we have (i) \(\imath'=\{(\delta(s),\mathcal{E}_{a}(\mathbb{P}_S(s)),S(\alpha\sigma)):s\in P^{-1}_o(\alpha\sigma)\cap\mathcal{L}(S/G)\}\), when \(\sigma\in\Sigma_a\), we can conclude that 
\begin{align}
    \imath'&=\left\{\!\!\!\!\!\!\!\!
    \begin{array}{cc}
         & (\delta(s\sigma w_1),\mathcal{E}_{a}(\mathbb{P}_S(s)(\sigma,S(\alpha\sigma))w_2),S(\alpha\sigma)): \\
         &s\!\in\! P^{-1}_o(\alpha)\!\cap\!\mathcal{L}(S/G),w_1\!\in\! (\Sigma_{uo}\!\cap\! S(\alpha\sigma))^*, \\
         &\mathbb{P}_S(s)(\sigma,S(\alpha\sigma))w_2=\mathbb{P}_S(s\sigma w_1)
    \end{array}\!\!
    \right\}\nonumber \\
    & =\mathbb{UR}_{S(\alpha\sigma)}\left(\!\!\left\{\!\!\!\!\!\!\!\!
    \begin{array}{cc}
         & (\delta(s\sigma),\mathcal{E}_{a}(\mathbb{P}_S(s)(\sigma,S(\alpha\sigma))),S(\alpha\sigma))\!: \\
         & s\!\in\! P^{-1}_o(\alpha)\!\cap\!\mathcal{L}(S/G)
    \end{array}\!\!\!\!
    \right\}\!\!\right) \nonumber \\
    & =\mathbb{UR}_{S(\alpha\sigma)}(\mathbb{NX}_{(\sigma,S(\alpha\sigma))}(\imath)), \nonumber
\end{align} 
and when \(\sigma\in\Sigma_{ua}\), we can conclude that 
\begin{align}
    \imath'&=\left\{\!\!\!\!\!\!\!\!
    \begin{array}{cc}
         & (\delta(s\sigma w_1),\mathcal{E}_{a}(\mathbb{P}_S(s)(\epsilon,S(\alpha\sigma))w_2),S(\alpha\sigma)): \\
         &s\!\in\! P^{-1}_o(\alpha)\!\cap\!\mathcal{L}(S/G),w_1\!\in\! (\Sigma_{uo}\!\cap\! S(\alpha\sigma))^*, \\
         &\mathbb{P}_S(s)(\epsilon,S(\alpha\sigma))w_2=\mathbb{P}_S(s\sigma w_1)
    \end{array}\!\!
    \right\}\nonumber \\
    & =\mathbb{UR}_{S(\alpha\sigma)}\left(\!\!\left\{\!\!\!\!\!\!\!\!
    \begin{array}{cc}
         & (\delta(s\sigma),\mathcal{E}_{a}(\mathbb{P}_S(s)(\epsilon,S(\alpha\sigma))),S(\alpha\sigma))\!: \\
         & s\!\in\! P^{-1}_o(\alpha)\!\cap\!\mathcal{L}(S/G)
    \end{array}\!\!\!\!
    \right\}\!\!\right) \nonumber \\
    & =\mathbb{UR}_{S(\alpha\sigma)}(\mathbb{NX}_{(\sigma,S(\alpha\sigma))}(\imath)); \nonumber
\end{align} 
and (ii) \(\Gamma(\imath')=S(\alpha\sigma)\).
And for each defined transition \((\imath',\sigma',\imath'')\), we have \(\imath''=(\imath',\sigma')\). For transitions \((\imath,S(\epsilon),\imath')\) and \((\imath',\sigma,\imath'')\) we can analogously get the same results. Thus, we
can conclude that \(B^S\sqsubseteq \mathfrak{B}\) according to \textbf{Step 1}.

Next, we prove that \(\mathfrak{B}\) is at least as large as \(B^S\) after calling the completeness check (\textbf{Step 2}). This conclusion directly follows the above construction, where each decision-state in \(B^S\) has at least one successor observation-state, and each observation-state in \(B^S\) has all transitions defined within the corresponding feasible observation events, and thus are complete. Since we have \(B^S\sqsubseteq \mathfrak{B}\), all states that are complete in \(B^S\) are also complete in \(\mathfrak{B}\), and thus will not be removed.

Therefore, we can finally conclude that the initial state of \(B^S\) is also the initial state of \(\mathfrak{B}\) after \textbf{Step 2}, i.e., \((\{m_0\},\epsilon)\in \mathbb{I}_D^\mathfrak{B}\). 
Hence,  Algorithm~\ref{alg:main} will not return ``no solution exists" when such a solution does exist, i.e., Algorithm~1 is also complete.
\end{proof}

By combining Lemmas~1 and~2, we can establish the correctness of Algorithm~\ref{alg:main}.
\begin{mythm}\upshape
    Algorithm~\ref{alg:main} correctly solves Problem~\ref{prob:online-opa-enf}, i.e., it is both sound and complete.
\end{mythm}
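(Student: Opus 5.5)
The theorem combines Lemma~\ref{lem:sound} and Lemma~\ref{lem:complete}, plus a check that Algorithm~\ref{alg:main} terminates and that whenever it does not report ``no solution exists'' it actually returns a valid IS-based control structure. I would organize the argument around three claims: (a) the algorithm halts in finitely many steps; (b) every output is a supervisor solving Problem~\ref{prob:online-opa-enf}; (c) the answer ``no solution exists'' is given only when no solution exists.

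\emph{Termination.} The information-state space $\mathbb{I}=2^M$ is finite, since $M\subseteq (X\times 2^X\times\Gamma)\cup\{m_0\}$. Hence $\mathbb{I}\times\Sigma_o^\epsilon$ is finite too. Step~1 inserts each decision-state and observation-state at most once, so the depth-first expansion stops after finitely many iterations. Each round of the while-loop in Step~2 removes at least one state of a finite graph, so Step~2 terminates. Step~3 again visits each state at most once.

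\emph{Well-formedness of the output.} If $\imath^\mathfrak{B}_0$ survives Step~2, I would argue that Step~3 yields a legitimate IS-based control structure in the sense of Definition~\ref{def:IS-Cstructure}. Every surviving decision-state is complete, so at least one transition $(\imath,\gamma,\imath')$ is available, and Step~3 keeps exactly one of them. Every surviving observation-state $\imath'$ is complete, so all feasible observations $\sigma\in\Sigma_o\cap\Gamma(\imath')\cap\Lambda(X(\imath'))$ have successors, and their targets were not pruned. Observations outside $\Lambda(X(\imath'))$ never occur in the closed loop, so leaving them undefined does not affect the decoded supervisor on $P_o(\L(S/G))$.

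\emph{Main subtlety.} The key point is that these successor transitions must survive pruning. The removal loop deletes every transition incident to a deleted state and repeats until no incomplete state remains. At the fixpoint, therefore, every transition still present leads to a remaining state. I also need that the transitions in $\mathfrak{B}$ are exactly the operators $\mathbb{UR}_\gamma\circ\mathbb{NX}_{(\sigma,\gamma)}$, so the extracted structure matches Definition~\ref{def:IS-Cstructure}. This is what makes Proposition~\ref{prop:IS} applicable, and I expect it to be the step needing the most care.

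Given well-formedness, Lemma~\ref{lem:sound} (via Theorem~\ref{theo:sound}) shows that the decoded supervisor enforces opacity, which establishes soundness. Lemma~\ref{lem:complete} shows that ``no solution exists'' is returned only when Problem~\ref{prob:online-opa-enf} is genuinely unsolvable, which establishes completeness. Together these prove the theorem.
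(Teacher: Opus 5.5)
Your proposal is correct and matches the paper's proof, which simply combines Lemma~\ref{lem:sound} (soundness, via Theorem~\ref{theo:sound}) with Lemma~\ref{lem:complete} (completeness). Your added checks, namely termination and that Step~3 yields a well-formed IS-based control structure whose transitions survive pruning, are points the paper leaves implicit in its description of Step~3; they make the same argument more explicit rather than changing it.
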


We conclude this section by discussing the complexity of the proposed supervisor synthesis algorithm. 
To synthesize an opacity-enforcing supervisor, first, we need to construct the structure \(\mathfrak{B}\) which contains at most \(2^{|X|\times2^{|X|}\times 2^{|\Sigma_c|}}\times|\Sigma_o|\) decision-states and \(2^{|X|\times2^{|X|}\times 2^{|\Sigma_c|}}\) observation-states, but this result can be further mitigated to \(2^{|X|\times2^{|X|}}\times 2^{|\Sigma_c|}\times|\Sigma_o|\) decision-states and \(2^{|X|\times2^{|X|}}\times 2^{|\Sigma_c|}\) observation-states since the information states are always consistent by construction.
For each decision-state there are at most \(2^{|\Sigma_c|}\) transitions and for each observation-state there are at most \(|\Sigma_o|\) transitions
Therefore, in the worst case, the largest possible \(\mathfrak{B}\) contains \(2^{|X|\times2^{|X|}+|\Sigma_c|}\cdot(|\Sigma_o|+1) \) states and \(2^{|X|\times2^{|X|}+|\Sigma_c|}\cdot|\Sigma_o|\cdot(2^{|\Sigma_c|}+1)\) transitions. The complexity of the completeness check procedure is quadratic in the size of \(\mathfrak{B}\). The complexity of the supervisor extraction procedure is linear in the size of \(\mathfrak{B}\).
Overall, the entire complexity of the proposed synthesis algorithm is doubly exponential in the size of the system \(G\). 
This double-exponential complexity arises due to the inherent difficulty of handling incomparable observations, as the information state involves both the intruder's state estimate and the supervisor's knowledge of the intruder. This double-exponential nature is consistent with established results for systems with incomparable observation sets \cite{dubreil2010supervisory,tong2018current,xie2021opacity}.

\section{From Observation-Triggered to Decision-triggered Decision Issuance}\label{section-change}
In the above sections, we have solved the opacity-enforcing supervisory control problem under the assumption \textbf{A4}, which states that ``The supervisor issues the control decision \emph{immediately} upon each new observable event occurrence."
In many applications, e.g., for  security or energy/bandwidth-saving purposes, the supervisor does not need to resend a new control decision to the actuator if it is the same as the currently applied one. Therefore, instead of assumption \textbf{A4} capturing the observation-triggered decision-issuance mechanism, we further investigate the \emph{decision-triggered mechanism}, which is captured by assumption \textbf{A4$^\prime$}.

\begin{itemize}
    \item [\textbf{A4$^\prime$}:] Upon each new observable event, the supervisor sends a new control decision only when it differs from the previous one; otherwise, the previous decision is maintained.
\end{itemize}
Hereafter, we show how to apply the proposed approach to solve Problem~\ref{prob:online-opa-enf} when assumption \textbf{A4}
is modified to \textbf{A4$^\prime$}.

Recalled that, 
for any string $s\in \L(S/G)$, 
the information-flow from the intruder's perspective in the observation-triggered setting is defined by 
$\mathbb{P}_S(s) \in  (\Sigma_a^\epsilon  \times \Gamma^\epsilon)^*$. 
However, in the decision-triggered setting, the intruder has less information. To account for this, the \emph{information-flow} under assumption \textbf{A4$^\prime$} is captured by a new function:
\[
\Tilde{\mathbb{P}}_S: \L(S/G)\to (\Sigma_a^\epsilon  \times \Gamma^\epsilon)^* 
\]
such that 
\begin{itemize}
    \item[(i)] $ \Tilde{\mathbb{P}}_S(\epsilon)=(\epsilon, S(\epsilon) )$; and
\item[(ii)]    for any string $s\sigma \in \L(S/G)$, we have
\begin{align} 
	\Tilde{\mathbb{P}}_S(s\sigma) \!=\!  \left\{\!\!\!
		\begin{array}{ll}
			 \Tilde{\mathbb{P}}_S(s)(\sigma ,\epsilon)            & \text{if }       
        [\sigma \in \Sigma_a]\wedge \neg\mathbf{D}   \\
			 \Tilde{\mathbb{P}}_S(s) (\epsilon, S(P_o(s\sigma)))  & \text{if }    
        [\sigma \not\in \Sigma_a]\wedge \mathbf{D}   \\
          \Tilde{\mathbb{P}}_S(s) (\sigma, S(P_o(s\sigma)))    & \text{if }    
          [\sigma \in \Sigma_a] \wedge \mathbf{D} \\
          \Tilde{\mathbb{P}}_S(s)  & \text{if }     
        [\sigma \not\in \Sigma_a]\wedge \neg\mathbf{D}
		\end{array}
		\right.\!\!\!\!\!, 
\end{align}
where  $ \mathbf{D}$ 
is the predicate that captures whether a decision change occurs, i.e., 
$ \mathbf{D}=\textsf{true}$ iff $S(P_o(s))\neq S(P_o(s\sigma))$.
\end{itemize}

Compared with the mapping $\mathbb{P}_S$, the main difference in $\Tilde{\mathbb{P}}_S$ is that we must further account for decision changes. That is, for events in $\Sigma_o$, although the supervisor always updates its decision, such a decision is visible to the intruder only when the condition $\mathbf{D}$ is satisfied. This ensures that the intruder only observes a change in the control decision when the supervisor's control decision actually differs from the previous one, reflecting the decision-triggered nature of the system.
Accordingly, the controlled state estimate under the decision-triggered mechanism is defined analogously to $ \mathcal{E}_{a}(\alpha)$ by  modifying  ${\mathbb{P}}_S$ to $\Tilde{\mathbb{P}}_S$, i.e., 
    \begin{equation}
        \Tilde{\mathcal{E}}_{a}(\alpha)\!=\!\{\delta(s)\in X: \exists S'\in \mathbb{S}\text{ s.t. }\Tilde{\mathbb{P}}_{S'}(s)=\alpha\}. 
    \end{equation}
Thus, the definition of opacity in Definition~\ref{def:online-cur-opa} should also be modified based on $\Tilde{\mathcal{E}}_{a}$ rather than ${\mathcal{E}}_{a}$. 

\begin{myexm}[Decision-triggered Online Observation]
We  still consider the running example system 
and consider  supervisor \(S:P_o(\L(G))\to \Gamma\) such that 
\(S(u_1)=S(u_1u_2)=\{a,b,u_2\}\), \(S(u_2)=S(u_2u_1)=\{a,b,u_1\}\), and \(S(\alpha)=\Sigma\) for any other \(\alpha\in P_o(\L(G))\). 
Then for string \(a u_1 u_2 u_2\in\L(S/G)\), its information-flow under observation-triggered  mechanism 
is 
\[
{\mathbb{P}}_S(s)=(\epsilon,\Sigma)(a,\epsilon)(\epsilon,\{a,b,u_2\})(\epsilon,\{a,b,u_2\})(\epsilon,\Sigma).
\]
Under the decision-triggered mechanism, its information-flow is 
\[
\Tilde{\mathbb{P}}_S(s)=(\epsilon,\Sigma)(a,\epsilon)(\epsilon,\{a,b,u_2\})(\epsilon,\Sigma).
\]
Here, the intruder does not see the second $(\epsilon,\{a,b,u_2\})$ as the control decision is not changed.
\end{myexm}
In order to synthesize a supervisor enforcing opacity for the decision-triggered mechanism, we can still follow the same approach used for the observation-triggered mechanism. Specifically, we need to:
\begin{itemize}
    \item[1)] 
    Compute the modified state estimator of the intruder based on the new information-flow;
    \item[2)] 
    Compute the knowledge of the supervisor regarding the intruder's state estimate from its own perspective;
    \item[3)] 
    Find an IS-based control structure with no bad states.
\end{itemize}
The last two parts follow exactly the same procedure as in the observation-triggered case. However, for the first part, we need to make the following modifications.

\textbf{Modified State Estimator.} 
In order to compute  $\Tilde{\mathcal{E}}_{a}$, one can modified the intruder state estimator $\mathfrak{A}$
to
\[
\Tilde{\mathfrak{A}} = 
(M,\hat{\mathcal{O}}, \Tilde{f}, m_0),
\]
where \(\Tilde{f}:M\times \hat{\mathcal{O}}\to M\) is modified from $f$ such that, 
for each state $m=(x,q,\gamma)\in M $, event  \(\sigma\in\Lambda(x)\cap\gamma\) and control decision  \(\gamma' \in \Gamma\),  we have 
   $\Tilde{f}(m, (\sigma,\gamma') )=( \delta(x,\sigma), q', \gamma'   )$, 
   where 
   \begin{equation}\label{eq:modified}
 \begin{aligned}
	q' \!=\! 
		\left\{
		\begin{array}{ll}
		    \text{UR}_{\gamma,a}(\text{NX}_\sigma(q))
            & \text{if }     \sigma \in \Sigma_a \text{ and } \gamma\!=\!\gamma'  \\
			  \text{UR}_{\gamma',a}(\text{UR}^{+}_{\gamma,a}(q))
            & \text{if }     \sigma \not\in  \Sigma_a  \text{ and } \gamma\!\neq\!\gamma'  \\
          \text{UR}_{\gamma',a}(\text{NX}_\sigma(q))
          & \text{if }     \sigma \in \Sigma_a \text{ and } \gamma\!\neq\!\gamma'\\
          q  & \text{if }     \sigma \not\in   \Sigma_{a} \text{ and } \gamma\!=\!\gamma'
		\end{array}
		\right..\!\!   
\end{aligned} 
\end{equation}
Compared with ${\mathfrak{A}} $, the main difference in the modified $\Tilde{\mathfrak{A}} $  
 is that the intruder will update its state estimate only when (i) it observes a new event, and (ii) the decision made by the supervisor is different from the previous one. Similarly, as in Proposition~\ref{prop:1}, it is easy to show that for any string
\(s\in \mathcal{L}(S/G)\),  we have 
\begin{equation}
 \Tilde{f}(\hat{s})=(\delta(s), \Tilde{\mathcal{E}}_{a}(\Tilde{\mathbb{P}}_S(s)),S(P_o(s))),    
\end{equation}
i.e., the modified 
$\Tilde{\mathfrak{A}}$   correctly computes the state estimator of the intruder under the decision-triggered mechanism. 

Based on the modified state estimator, the IS-based control structure, as well as the associated synthesis algorithm, is the same as the observation-triggered case. 
The only difference is that 
when defining operators
$\mathbb{NX}_{(\sigma,\gamma)}(\imath)$
and
$\mathbb{UR}_\gamma(\imath)$, 
one should use the modified transition function 
$\tilde{f}$ in $\tilde{\A}$ to estimate the knowledge of the intruder
rather than using the original one $f$. 

Here, we use the following example to illustrate a solution IS-based control structure for the decision-triggered case.

\begin{myexm}[Control Structure under Decision-Triggered Mechanism]

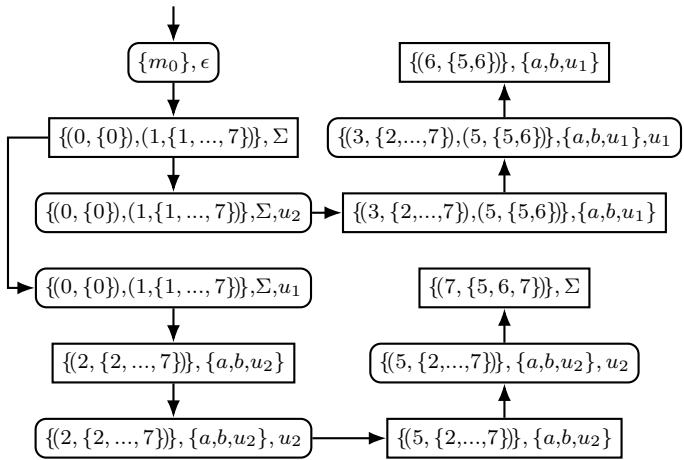
\begin{figure}
    \centering
        \begin{tikzpicture}[->,>={Latex}, thick, initial text={}, node distance= 1 cm, initial where=above, thick, D node/.style={rectangle, rounded corners, draw, minimum size=4mm, font=\footnotesize}, O node/.style={rectangle, draw, minimum size=4mm, font=\footnotesize}]    
   \node[initial, state, D node] (0) { $ \{m_0\},\epsilon $ } ;  
   \node[state, O node, ] [below of = 0] (1) {\(\{\!(0,\{0\}\!),\!(1,\!\{1,...,7\}\!)\!\}, \Sigma\)};   
   \node[state, D node][below of = 1] (2) {\(\{\!(0,\{0\}\!),\!(1,\!\{1,...,7\}\!)\!\},\! \Sigma,\!u_2\)};
   \node[state, D node][below of = 2] (3) {\(\{\!(0,\{0\}\!),\!(1,\!\{1,...,7\}\!)\!\}, \!\Sigma,\!u_1\)};
   \node[state, O node][below of = 3] (4) {\(\{\!(2,\{2,...,7\}\!)\!\}, \{a,\!b,\!u_2\}\)};
   \node[state, D node][below of = 4] (5) {\(\{\!(2,\{2,...,7\}\!)\!\}, \{a,\!b,\!u_2\},u_2\)};
   
   \node[state, O node][xshift=4.4cm] (6) {\(\{\!(6,\{5,\!6\}\!)\!\}, \{a,\!b,\!u_1\}\)};
   \node[state, D node][below of = 6] (7) {\(\{\!(3,\{2,\!...,\!7\}\!),\!(5,\{5,\!6\}\!)\!\},\! \{a,\!b,\!u_1\},\!u_1\)};
   \node[state, O node][below of = 7] (8) {\(\{\!(3,\{2,\!...,\!7\}\!),\!(5,\{5,\!6\}\!)\!\},\! \{a,\!b,\!u_1\}\)};
   \node[state, O node][below of = 8] (9) {\(\{\!(7,\{5,6,7\}\!)\!\}, \Sigma\)};
   \node[state, D node][below of = 9] (10) {\(\{\!(5,\{2,\!...,\!7\}\!)\!\}, \{a,\!b,\!u_2\},u_2\)};
   \node[state, O node][below of = 10] (11) {\(\{\!(5,\{2,\!...,\!7\}\!)\!\}, \{a,\!b,\!u_2\}\)};
   
    \draw[->] (1) -- (-2.2cm, -1cm)--(-2.2cm, -3cm)--(3);
   
   \path[->]
   (0) edge node [] {} (1)
   (1) edge node [] {} (2)
   (3) edge node [] {} (4)
   (4) edge node [] {} (5)
   (7) edge node [] {} (6)
   (8) edge node [] {} (7)
   (2) edge node [] {} (8)
   (10) edge node [] {} (9)
   (11) edge node [] {} (10)
   (5) edge node [] {} (11);
   \end{tikzpicture}
    \caption{IS-based Control Structure \(\Tilde{\S}\) of \(S\) in Example~\ref{example1}.}
    \label{fig:controlstructure2}
\end{figure}
We still consider the system shown in Figure~\ref{fig:motiexample}. The IS-based control structure \(\Tilde{\S}\) that induces the previously considered supervisor \(S\) is shown in Figure~\ref{fig:controlstructure2}, 
where  (\(S(u_1)=S(u_1u_2)=\{a,b,u_2\}\), \(S(u_2)=S(u_2u_1)=\{a,b,u_1\}\), and \(S(\alpha)=\Sigma\) for any other \(\alpha\in P_o(\L(G))\)). 
Recall that this supervisor cannot enforce current-state opacity under the assumption \textbf{A4}, as we have shown in Example~\ref{exm:opa-enforcing}.

 However, under   assumption \textbf{A4$^\prime$}, 
 this supervisor enforces opacity. 
 This is because we need to use a different transition function $\tilde{f}$ to estimate the intruder's state estimate, 
 and  the previously unsafe observation-state \(\{\!(7,\{7\},\!\Sigma)\!\}\) now becomes a safe one \(\{\!(7,\{5,6,7\},\!\Sigma)\!\}\), and the corresponding controlled state estimate now becomes
\[
\Tilde{\mathcal{E}}_a((\epsilon,\Sigma)(a,\epsilon)(\epsilon,\{a,b,u_2\})(\epsilon,\Sigma))=\{5,6,7\}.
\]
\end{myexm}

\section{Conclusion}\label{section-clu}
In this paper, we revisited the opacity-enforcing supervisor control problem from a new perspective. Our work provides a sound and complete solution to this problem under the new \emph{a priori unknown supervisor} setting. This bridges the gap between the previous fully unknown and \emph{a priori} known settings, as it preserves the solvability results of the former while allowing the intruder to leverage certain knowledge of the supervisor, as in the latter. Furthermore, our results offer a complete solution to the standard opacity control problem when the intruder has finer information, even when the supervisor is fully known \emph{a priori}.
In future work, we would like to extend our framework to the stochastic setting by quantifying the level of opacity \cite{lefebvre2020exposure,lefebvre2020privacy,udupa2025synthesis}, as well as to continuous dynamic systems by considering uncountable state spaces \cite{liu2020verification,zhong2025secure}.%

\bibliographystyle{plain}
\bibliography{references}

\appendix
\textbf{Proof of Proposition~\ref{prop:1}} 
\begin{proof}
    The conclusions for the first and third components are straightforward according to the definition of the intruder state estimator. We now prove the second component by induction on the length of \(\hat{s}\):
    
    \emph{Induction Basis}: We first suppose that \(\hat{s}=(\epsilon,S(\epsilon))\). By Definition~\ref{def:on-curr-est}, we have \(\mathcal{E}_{a}(\mathbb{P}_S(s))\!=\!\{\delta(s)\in X: \exists S'\in\mathbb{S} \text{ s.t. } \mathbb{P}_{S'}(s)=(\epsilon,S(\epsilon))\}=\{\delta(s)\in X: \exists S'\in\mathbb{S} \text{ s.t. } S'(\epsilon)=S(\epsilon), s\in \L(S'/G), P_a(s)=\epsilon\}=\{\delta(s)\in X: s\in (\Sigma_{ua}\cap S(\epsilon))^*\}=\text{UR}_{S(\epsilon),a}(\{x_0\})\), which aligns with Equation~\eqref{eq:9}. Thus, the induction basis holds.

    \emph{Induction Step}: 
    Now we suppose that the second component equals to \(\mathcal{E}_{a}(\mathbb{P}_S(s))\) holds for \(|\hat{s}|=k\). For the augmented string \(\hat{s}(\sigma,\gamma')\), we first note that for all \(S'\in\mathbb{S}\) such that there exists a string \(t\in\mathcal{L}(S'/G): \mathbb{P}_{S'}(t)=\mathbb{P}_S(s)(\sigma,\gamma')\), there must exists a string \(t'\in\overline{\{t\}}:\mathbb{P}_{S'}(t')=\mathbb{P}_S(s)\), we then accordingly consider the following three cases:
    \begin{itemize}
        \item [(1)] If \(\sigma\in\Sigma_a\setminus\Sigma_o\), we have \vspace{-3pt}
        \begin{align}
                 &\mathcal{E}_{a}(\mathbb{P}_S(s)(\sigma,\epsilon))\nonumber\\
                 &=\{\delta(t)\in X: \exists S'\in\mathbb{S}\text{ s.t. } \mathbb{P}_{S'}(t)=\mathbb{P}_S(s)(\sigma,\epsilon)\}\nonumber\\
                 &=\left\{
                    \!\!\!\!\!\!\begin{array}{rr}
                         &  \delta(\delta(t'),w)\in X: \delta(t')\in \mathcal{E}_{a}(\mathbb{P}_S(s)), \\
                         & \exists S'\in\mathbb{S}: \mathbb{P}_{S'}(t'w)=\mathbb{P}_S(s)(\sigma,\epsilon) 
                    \end{array}\right\} \nonumber\\
                 &= \left\{
                    \!\!\!\!\!\!\begin{array}{rr}
                         &  \delta(\delta(t'),w)\in X: \delta(t')\in \mathcal{E}_{a}(\mathbb{P}_S(s)), \\
                         & w\in\{\sigma\}\cdot(\Sigma_{ua}\cap\gamma)^*
                    \end{array}\right\} \nonumber\\
                &= \text{UR}_{\gamma,a}(\!\{\delta(\delta(t'),\sigma)\!\in\! X\!:\! \delta(t')\!\in\! \mathcal{E}_{a}(\mathbb{P}_S(s))\}\!)\nonumber\\
                &= \text{UR}_{\gamma,a}(\text{NX}_{\sigma}(\mathcal{E}_{a}(\mathbb{P}_S(s))))\nonumber
        \end{align}
        \item [(2)] If \(\sigma\in\Sigma_o\setminus\Sigma_a\), we have \vspace{-3pt}
        \begin{align}
                 & \mathcal{E}_{a}(\mathbb{P}_S(s)(\epsilon,\gamma'))\nonumber\\
                 & =\{\delta(t)\in X: \exists S'\in\mathbb{S}\text{ s.t. } \mathbb{P}_{S'}(t)=\mathbb{P}_S(s)(\epsilon,\gamma')\}\nonumber\\
                 &=\left\{
                    \!\!\!\!\!\!\begin{array}{rr}
                         &  \delta(\delta(t'),w)\in X: \delta(t')\in \mathcal{E}_{a}(\mathbb{P}_S(s)), \\
                         & \exists S'\in\mathbb{S}: \mathbb{P}_{S'}(t'w)=\mathbb{P}_S(s)(\epsilon,\gamma')
                    \end{array}\right\}\nonumber\\
                &=\left\{
                    \!\!\!\!\!\!\begin{array}{rr}
                         &  \delta(\delta(t'),w)\in X: \delta(t')\in \mathcal{E}_{a}(\mathbb{P}_S(s)), \\
                         & w\in((\Sigma_{ua}\cap\gamma)^*\setminus\{\epsilon\})\cdot(\Sigma_{ua}\cap\gamma')^*
                    \end{array}\right\} \nonumber\\
                &= \!\!\!\!\!\!\begin{array}{rr}
                     & \text{UR}_{\gamma',a}(\{ \delta(\delta(t'),w')\in X: \delta(t')\in \mathcal{E}_{a}(\mathbb{P}_S(s))\\
                     & w'\in ((\Sigma_{ua}\cap\gamma)^*\setminus\{\epsilon\})
                \end{array} \nonumber\\
                &=\text{UR}_{\gamma',a}(\text{UR}^{+}_{\gamma,a}(\mathcal{E}_{a}(\mathbb{P}_S(s))))\nonumber
        \end{align}
        \item [(3)]If \(\sigma\in\Sigma_o\cap\Sigma_a\), we have \vspace{-3pt}
        \begin{align}
                 & \mathcal{E}_{a}(\mathbb{P}_S(s)(\sigma,\gamma'))\nonumber\\
                 & =\{\delta(t)\in X: \exists S'\in\mathbb{S}\text{ s.t. } \mathbb{P}_{S'}(t)=\mathbb{P}_S(s)(\sigma,\gamma')\}\nonumber\\
                 &=\left\{
                    \!\!\!\!\!\!\begin{array}{rr}
                         &  \delta(\delta(t'),w)\in X: \delta(t')\in \mathcal{E}_{a}(\mathbb{P}_S(s)), \\
                         & \exists S'\in\mathbb{S}: \mathbb{P}_{S'}(t'w)=\mathbb{P}_S(s)(\sigma,\gamma')
                    \end{array}\right\} \nonumber\\
                &=\left\{
                    \!\!\!\!\!\!\begin{array}{rr}
                         &  \delta(\delta(t'),w)\in X: \delta(t')\in \mathcal{E}_{a}(\mathbb{P}_S(s)), \\
                         & w\in\{\sigma\}\cdot(\Sigma_{ua}\cap\gamma')^*
                    \end{array}\right\} \nonumber\\
                &=\text{UR}_{\gamma',a}(\!\{\delta(\delta(t'),\sigma)\!\in\! X\!:\! \delta(t')\!\in \!\mathcal{E}_{a}(\mathbb{P}_S(s))\}\!)\nonumber\\
                &=\text{UR}_{\gamma',a}(\text{NX}_{\sigma}(\mathcal{E}_{a}(\mathbb{P}_S(s))))\nonumber
        \end{align}

    \end{itemize}
    The above three conditions all align with Equation~\eqref{eq:10}, and the proof is thus completed.
\end{proof}

\textbf{Proof of Proposition~\ref{prop:IS}} 
\begin{proof}
    We prove by induction on the length of $\alpha$. 

    \emph{Induction Basis}:
    Suppose that $|\alpha|=0$, i.e., \(\alpha=\epsilon\). 
    We have \vspace{-3pt}
    \begin{align}
        \mathbb{I}^\S_O(\epsilon)&=h^\S_{DO}(\imath_0^\S,S(\epsilon)) \nonumber\\
        &=\mathbb{UR}_{S(\epsilon)}({\mathbb{NX}}_{(\epsilon,S(\epsilon))}(\{m_0\}))\nonumber\\
        &={\mathbb{UR}}_{S(\epsilon)}(\{(x_0,\text{UR}_{S(\epsilon),a}(\{x_0\}),S(\epsilon))\})\nonumber
    \end{align}
    According to Proposition~\ref{prop:1}, \(\text{UR}_{S(\epsilon),a}(\{x_0\})=\mathcal{E}_{a}((\epsilon,S(\epsilon)))\), thus we can further conclude \(\mathbb{I}^\S_O(\epsilon)=\{(x,q,\gamma)\in M : (x,q,\gamma)=f((x,\mathcal{E}_{a}((\epsilon,S(\epsilon))),S(\epsilon)),w), w\in ((\Sigma_{uo}\cap S(\epsilon))\times\{S(\epsilon)\})^*\}=\{(x,q,\gamma)\in M : (x,q,\gamma)=f((\epsilon,S(\epsilon))w), w\in ((\Sigma_{uo}\cap S(\epsilon))\times\{S(\epsilon)\})^*\}\). Then $X(\mathbb{I}^\S_O(\epsilon))=\{ \delta(w) \in X: w \in (\Sigma_{uo}\cap S(\epsilon))^* \}=\mathcal{E}^S_{o}(\epsilon)$ and
    \(Q(\mathbb{I}^\S_O(\epsilon))=\{\mathcal{E}_{a}((\epsilon,S(\epsilon))w): w\in((\Sigma_{uo}\cap\Sigma_a\cap S(\epsilon))\times\{\epsilon\})^*\}=\{\mathcal{E}_{a}(\mathbb{P}_S(s)):s\in P_o^{-1}(\epsilon)\cap \L(S/G)\}\). The induction basis thus holds.

    \emph{Induction Step}:
    Now we suppose that the conclusion holds for \(|\alpha|=k\), we prove that it also holds for \(\alpha\sigma\in P_o(\L(S/G))\), where \(\sigma\in\Sigma_o\). According to Def.~\ref{def:IS-Cstructure}, we have that \vspace{-3pt}
    \[
    \mathbb{I}^\S_O(\alpha\sigma) = {\mathbb{UR}}_{S(\alpha\sigma)}({\mathbb{NX}}_{(\sigma,S(\alpha\sigma))}(\mathbb{I}^\S_O(\alpha))
    \]   
    When \(\sigma\in\Sigma_o\cap\Sigma_a\), we have \vspace{-3pt}
    \begin{align}
        &\mathbb{I}^\S_O(\alpha\sigma) \nonumber\\
        & ={\mathbb{UR}}_{S(\alpha\sigma)}
        \left(\!\!
            \left\{\!\!\!\!\!\!\!\!
                \begin{array}{ll}
                        & (x',q',\gamma')\!\in\! M: \exists (x,q,\gamma)\!\in\! \mathbb{I}^\S_O(\alpha) \\ &(x',\!q',\!\gamma'\!)\!=\! f(\!(x,\!q,\!\gamma),\!(\sigma,\! S(\alpha\sigma)\!)\!)
                \end{array}\!\!\!\!
            \right\}\!\!
        \right) \nonumber \\
        & ={\mathbb{UR}}_{S(\alpha\sigma)}
        \left(\!\!
            \left\{\!\!\!\!\!\!\!\!
                \begin{array}{ll}
                        & (\delta(s\sigma),\mathcal{E}_{a}(\mathbb{P}_S(s)(\sigma,S(\alpha\sigma))),S(\alpha\sigma)):\\
                        & s\in P^{-1}_o(\alpha)\cap\L(S/G)
                \end{array}\!\!
            \right\}\!\!
        \right) \nonumber\\
        & = \left\{\!\!\!\!\!\!\!\!
        \begin{array}{ll}
             &  (x,q,\gamma)\in M: (x,q,\gamma)=\\
             & f((\delta(s\sigma),
             \mathcal{E}_{a}(\mathbb{P}_S(s)(\sigma,S(\alpha\sigma))),S(\alpha\sigma)),w)\nonumber \\
             & s\in P^{-1}_o(\alpha)\!\cap\!\L(S/G), w\in (\!(\Sigma_{uo}\!\cap\! S(\alpha\sigma)\!)\!\times\!\{S(\alpha\sigma)\}\!)^*
        \end{array}\!\! 
        \right\}\nonumber      
    \end{align}
    Then we can conclude that \vspace{-3pt}
    \begin{align}
        X(\mathbb{I}^\S_O(\alpha\sigma)) &=\left\{\delta(s\sigma w)\in X: \!\!\!\!
        \begin{array}{cc}
             &s\in P^{-1}_o(\alpha)\cap\L(S/G), \\
             & w\in(\Sigma_{uo}\cap S(\alpha\sigma))^*
        \end{array}\right\}\nonumber \\
        &=\{\delta(s)\in X: s\in P^{-1}_o(\alpha\sigma)\cap\L(S/G)\}\nonumber \\
        &=\mathcal{E}^S_{o}(\alpha\sigma)\nonumber
    \end{align}
    and \vspace{-3pt}
    \begin{align}
    Q(\mathbb{I}^\S_O(\alpha\sigma))
        & = \left\{\!\!\!\!\!\!\!\!
                \begin{array}{ll}
                        & \mathcal{E}_{a}(\mathbb{P}_S(s)(\sigma,S(\alpha\sigma))w):\\
                        & s\in P^{-1}_o(\alpha)\cap\L(S/G),\\ & w\in((\Sigma_{uo}\!\cap\!\Sigma_a\!\cap\! S(\alpha\sigma))\!\times\!\{\epsilon\})^*
                \end{array}\!\!
        \right\} \nonumber \\
        & = \{\mathcal{E}_{a}(\mathbb{P}_S(s)):s\in P_o^{-1}(\alpha\sigma)\cap\L(S/G)\}\nonumber
    \end{align}
    When \(\sigma\in\Sigma_o\cap\Sigma_{ua}\), we have \vspace{-3pt}
    \begin{align}
        &\mathbb{I}^\S_O(\alpha\sigma) \nonumber\\
        & ={\mathbb{UR}}_{S(\alpha\sigma)}
        \left(\!\!
            \left\{\!\!\!\!\!\!\!\!
                \begin{array}{ll}
                        & (x',q',\gamma')\!\in\! M: \exists (x,q,\gamma)\!\in\! \mathbb{I}^\S_O(\alpha) \\ &(x',\!q',\!\gamma'\!)\!=\! f(\!(x,\!q,\!\gamma),\!(\sigma,\! S(\alpha\sigma)\!)\!)
                \end{array}\!\!\!\!
            \right\}\!\!
        \right) \nonumber \\
        & ={\mathbb{UR}}_{S(\alpha\sigma)}
        \left(\!\!
            \left\{\!\!\!\!\!\!\!\!
                \begin{array}{ll}
                        & (\delta(s\sigma),\mathcal{E}_{a}(\mathbb{P}_S(s)(\epsilon,S(\alpha\sigma))),S(\alpha\sigma)):\\
                        & s\in P^{-1}_o(\alpha)\cap\L(S/G)
                \end{array}\!\!
            \right\}\!\!
        \right) \nonumber\\
        & = \left\{\!\!\!\!\!\!\!\!
        \begin{array}{ll}
             &  (x,q,\gamma)\in M: (x,q,\gamma)=\\
             & f((\delta(s\sigma),
             \mathcal{E}_{a}(\mathbb{P}_S(s)(\epsilon,S(\alpha\sigma))),S(\alpha\sigma)),w)\nonumber \\
             & s\in P^{-1}_o(\alpha)\!\cap\!\L(S/G), w\in (\!(\Sigma_{uo}\!\cap\! S(\alpha\sigma)\!)\!\times\!\{S(\alpha\sigma)\}\!)^*
        \end{array}\!\! 
        \right\}\nonumber      
    \end{align}
    We can also conclude that \vspace{-3pt}
    \begin{align}
        X(\mathbb{I}^\S_O(\alpha\sigma))=\mathcal{E}^S_{o}(\alpha\sigma)\nonumber
    \end{align}
    and \vspace{-3pt}
    \begin{align}
    Q(\mathbb{I}^\S_O(\alpha\sigma)) 
        & = \left\{\!\!\!\!\!\!\!\!
                \begin{array}{ll}
                        & \mathcal{E}_{a}(\mathbb{P}_S(s)(\epsilon,S(\alpha\sigma))w):\\
                        & s\in P^{-1}_o(\alpha)\cap\L(S/G),\\ &w\in((\Sigma_{uo}\!\cap\!\Sigma_a\!\cap\! S(\alpha\sigma))\!\times\!\{\epsilon\})^*
                \end{array}\!\!
        \right\} \nonumber \\
        & = \{\mathcal{E}_{a}(\mathbb{P}_S(s)):s\in P_o^{-1}(\alpha\sigma)\cap\L(S/G)\}\nonumber
    \end{align}
    The induction step is thus completed.
\end{proof}

\end{document}